\newcommand{\PKG}[1]{\texttt{#1}}
\newcommand{\FOSS}{FOSS}
\newcommand{\MANCOOSI}{Mancoosi}
\newcommand{\DEP}{\ensuremath{\rightarrow}}	
\newcommand{\SDEP}{\ensuremath{\Rightarrow}}	
\newcommand{\SPREDS}[1]{\ensuremath{Spreds(#1)}}	
\newcommand{\SCONS}[1]{\ensuremath{Scons(#1)}}	
\newcommand{\DSENS}[1]{\ensuremath{|#1|}}
\newcommand{\SENS}[1]{\ensuremath{||#1||}}
\newcommand{\IS}[2]{\ensuremath{Is(#1,#2)}}
\newcommand{\DIS}[2]{\ensuremath{DirIs(#1,#2)}}
\newcommand{\SDOM}[2]{\ensuremath{#1\succcurlyeq_{Is}#2}}
\newcommand{\SDOMUPTO}[3]{\ensuremath{{#1}\succcurlyeq_{Is}^{#3}{#2}}}
\newcommand{\PRETHM}{}
\newcommand{\POSTTHM}{}
\newcounter{myindex}
\newcommand{\INCR}{\stepcounter{myindex}\arabic{myindex}}
\lstdefinestyle{debctrl}
               {basicstyle=\small\normalfont\ttfamily,
                 showstringspaces=false,
                 emph={Package,Version,Build,Depends,Conflicts,Provides},
                 keywordstyle=\color{black}\bf,
                 emphstyle=\bf,
               }
\newtheorem{definition}{Definition}[section]
\newtheorem{proposition}[definition]{Proposition}
\newtheorem{example}[definition]{Example}
\newtheorem{remark}[definition]{Remark}
\definecolor{mancoosi@lightblue}{RGB}{1,94,140}
\definecolor{mancoosi@green}{RGB}{157,199,218}
\definecolor{mancoosi@red}{RGB}{189,16,11}
\title{Technical Report : Strong Dependencies between Software
  Components \footnote{Partially supported by the European Community's
    7th Framework Programme (FP7/2007-2013), grant agreement
    n${}^\circ$214898, ``\MANCOOSI'' project.}}
\renewcommand{\thesection}{\arabic{section}} 
\begin{document}
\begin{titlepage}
%
%
\begin{minipage}[p]{7cm}
\begin{large}
\textcolor{mancoosi@lightblue}{
\begin{tabular}{l}
Specific Targeted Research Project\\
Contract no.214898\\
Seventh Framework Programme: FP7-ICT-2007-1\\ 
\end{tabular}
}
\end{large}
\end{minipage}
\hfill
\includegraphics*[width=4cm]{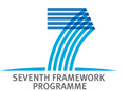} 

%
%

\includegraphics*[width=5cm]{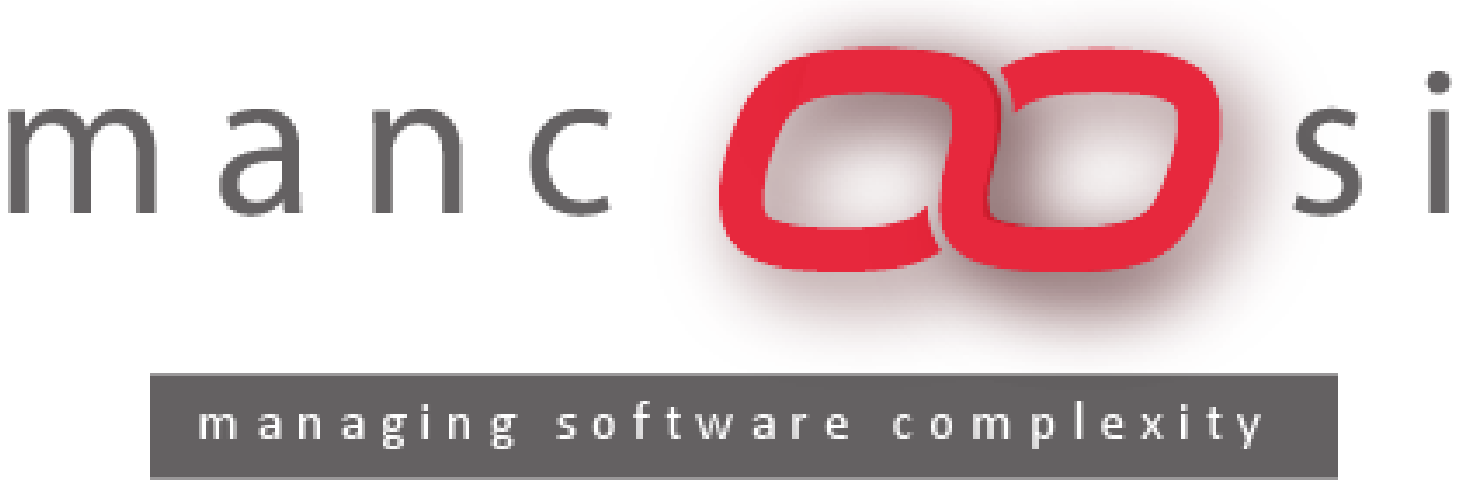} \hfill
\textcolor{mancoosi@green}{\LARGE\bf Technical Report 0002}

\vfill

\begin{center}

%
%
{\bf \Large MANCOOSI} \\[3em]

{\bf \Large Managing the Complexity of the Open Source Infrastructure}\\
\end{center}

\vfill 
%
%
\begin{framed}
\centering{
  \textcolor{mancoosi@lightblue}{\Huge \bf Strong Dependencies between
    Software Components}}
\end{framed}
\vspace{1cm}
\begin{center}\Large
  \begin{tabular}{c@{\hspace{1cm}}c}
    Pietro Abate (pietro.abate@pps.jussieu.fr) \\ 
    Jaap Boender (jaap.boender@pps.jussieu.fr)\\
    Roberto Di Cosmo (roberto@dicosmo.org) \\
    Stefano Zacchiroli (zack@pps.jussieu.fr)
  \end{tabular}\\[2ex]
  Universitè Paris Diderot, PPS\\
  UMR 7126, Paris, France
\end{center}

\vfill

\begin{center}
\large \today
\end{center}

\vfill

\begin{center}
  \bf\large Web site: \url{www.mancoosi.org}
\end{center}

\end{titlepage}

\tableofcontents
\section*{Abstract}

Component-based systems often describe context requirements in terms
of explicit inter-component dependencies. Studying large instances of
such systems---such as free and open source software (\FOSS)
distributions---in terms of declared dependencies between packages is
appealing. It is however also misleading when the language to express
dependencies is as expressive as boolean formulae, which is often the
case. In such settings, a more appropriate notion of component
dependency exists: \emph{strong dependency}. This paper introduces
such notion as a first step towards modeling semantic, rather then
syntactic, inter-component relationships.

Furthermore, a notion of component \emph{sensitivity} is derived from
strong dependencies, with applications to quality assurance and to the
evaluation of upgrade risks. An empirical study of strong dependencies
and sensitivity is presented, in the context of one of the largest,
freely available, component-based system.

\newpage

\section{Introduction}
\label{sec:intro}

Component-based software architectures~\cite{szyperski} have the
property of being upgradeable piece-wise, without necessarily touching
all the pieces at the same time. The more pieces are affected by a
single upgrade, the higher the impact of the upgrade can be on the
usual operations performed by the overall system; this impact can
either be beneficial (if the upgrade works as planned) or disastrous
(if not). Package-based \FOSS{} (Free and Open Source Software)
distributions are possibly the largest-scale examples of
component-based architectures, their upgrade effects are experienced
daily by million of users world-wide, and the historical data
concerning their evolution is publicly available.

Within \FOSS{} distributions, software components are managed as
\emph{packages}~\cite{hotswup:package-upgrades}. Packages are
described with meta-information, which include complex
inter-relationships describing the static requirements to run properly
on a target system. Requirements are expressed in terms of other
packages, possibly with restrictions on the desired versions. Both
positive requirements (\emph{dependencies}) and negative requirements
(\emph{conflicts}) are usually allowed.

\PRETHM
\begin{example}
\label{ex:postfix}
An excerpt of the inter-package relationships of the \PKG{postfix}
Internet mail transport agent in Debian
GNU/Linux\footnote{\url{http://www.debian.org}} currently reads:
\begin{lstlisting}[style=debctrl]
Package: postfix
Version: 2.5.5-1.1
Depends: libc6 (>= 2.7-1), libdb4.6, ssl-cert,
  libsasl2-2, libssl0.9.8 (>= 0.9.8f-5),
  debconf (>= 0.5) | debconf-2.0,
  netbase, adduser (>= 3.48), dpkg (>= 1.8.3),
  lsb-base (>= 3.0-6)
Conflicts: libnss-db (<< 2.2-3), smail,
  mail-transport-agent, postfix-tls
Provides: mail-transport-agent, postfix-tls
\end{lstlisting}
\end{example}
\POSTTHM

As this short example shows, inter-package relationships can get quite
complex, and there are plenty of more complex examples to be found in
distributions like Debian. In particular, the language to express
package relationships is not as simple as \emph{flat} lists of
component predicates, but rather a structured language whose syntax
and semantics is expressed by conjunctive normal form (CNF)
formulae~\cite{EdosAse06}. In Example~\ref{ex:postfix}, commas
represent logical conjunctions among predicates, whereas bars
(``\texttt{|}'') represent logical disjunctions. Also, indirections by
the mean of so-called \emph{virtual packages} can be used to declare
feature names over which other packages can declare relationships; in
the example (see: ``Provides'') the package declares to
provide the features called \PKG{postfix-tls} and
\PKG{mail-transport-agent}.

Within this setting, it is interesting to analyse the \emph{dependency
  graph} of all packages shipped by a mainstream \FOSS{}
distribution. This graph is potentially very large as distributions
like Debian are composed of several tens of thousands packages but it
is surely smaller than widely studied graphs such as the World Wide
Web graph~\cite{albert_www_diameter}. It is also more expressive, in
the sense that it contains different types of edges (dependencies and
conflicts for example) and allows the use of disjunctions to express alternative
paths.  Simple graph encodings of the
package universe have been proposed in the
past~\cite{labelle_os_networks,maillart_empirical_2008}, to study the
adherence of the dependency graph to small-world network laws. In such
encodings, inter-package relationships were approximated by a simple
binary relation of \emph{direct} dependency, which is noted $p\DEP q$
in this paper. Formally, $p\DEP q$ holds whenever package $q$ occurs
syntactically in the dependency formula of $p$. This notion of direct
dependency does not distinguish between $q$ occurring in conjunctive
or disjunctive position, ignoring the semantic difference between
conjunctive and disjunctive dependencies, as well as the presence of
conflicts among components.

In this paper we argue that there is a different dependency graph to
be studied to grasp meaningful relationships among software
components: a graph that represents the \emph{semantics} of
inter-component relationships, in which an edge between two components
is drawn only if the first cannot be installed without installing the
second. We call such a graph the \emph{strong dependency graph}, argue
that it is better suited to study package universes in component-based
architectures, and study its network properties.  Finally, we argue
that the strong dependency graph can be used to establish a measure of
package ``sensitivity'' which has several uses, from distribution wide
quality assurance to establishing the potential risks of package
upgrades. As a relevant, yet empirical, case study we build and
analyse the strong dependency graph of present and past \FOSS{}
distributions, as well as the corresponding package sensitivity.

The rest of the paper is structured as follows:
Section~\ref{sec:strongdeps} introduces the notion of strong
dependency, highlights the differences with plain dependencies and
proposes related sensitivity metrics. Section~\ref{sec:measures}
computes dependencies and sensitivity of components of a large and
popular \FOSS{} distribution. Section~\ref{sec:algo} gives an
efficient algorithm to compute strong dependencies for large software
repositories. Section~\ref{sec:applications} discusses applications of
the proposed metrics for quality assurance and upgrade risk
evaluation. Before concluding, Section~\ref{sec:related} discusses
related research.

\section{Strong dependencies}
\label{sec:strongdeps}

Component dependencies can be used to compute relevant quality
measures of software repositories, for instance to identify
particularly fragile
components~\cite{dick-mining-db,mining-sw-quality,dynamine}.  It is
well known that small-world networks are resilient to random failures
but particularly weak in the presence of attacks, due to the existence
of highly connected \emph{hub nodes} \cite{albert-2000-406}. To
identify the components whose modification (e.g., removal or upgrade)
can have a high potential impact on the stability of a complex
software system, it is natural to look for \emph{hubs} on which a lot
of other components depend.

In \FOSS{} distributions, as well as many other component-based
systems~\cite{apache-maven,clayberg-eclipse-plugin}, the language used
to express inter-package relationships is expressive enough to cover
propositional logic. As a consequence, considering only \emph{plain
  connectivity}---i.e., the possibility of going from one package to
another following dependency arcs---is no longer meaningful to
identify hubs. For example, if $p$ is to be installed and there exists
a dependency path from $p$ to $q$, it is not true that $q$ is always
needed for $p$, and in some cases $q$ may even be incompatible with
$p$.

In other terms, the \emph{syntactic} connectivity notion does not tell
much about the real structure of dependencies: we need to go further
and analyse the \emph{semantic} connectivity among software components
induced by the explicit dependencies in the graph. That led us to the
following definition.

\PRETHM
\begin{definition}[Strong dependency]
  \label{def:strongdep}
  Given a repository $R$, we say that a package $p$ in $R$
  \emph{strongly depends} on a package $q$ in $R$, written $p\SDEP_R
  q$, if there exists a healthy installation of $R$ containing $p$,
  and every healthy installation of $R$ containing $p$ also contains
  $q$.  We write $\SPREDS{p}_R$ for the set $\{q|q\SDEP_R p\}$ of
  strong predecessors of a package $p$ in $R$, and $\SCONS{p}_R$ for
  the set $\{q|p\SDEP_R q\}$ of strong successors of $p$ in $R$.
\end{definition}
\POSTTHM

In the following, we will drop the $R$ subscript when the repository
is clear from the context.

The notions of repository and healthy installation used here are
from~\cite{EdosAse06}; the underlying intuitions are as follows. A
\emph{repository} is a set of packages, together with dependencies and
conflicts encoded as propositional logic predicates over other
packages contained therein; an \emph{installation} is a subset of the
repository; an installation is said to be \emph{healthy} when all its
packages have their dependencies satisfied within the installation and
dually their conflicts \emph{un}satisfied.

Intuitively, $p$ strongly depends on $q$ with respect to $R$ if it is
not possible to install $p$ without also installing $q$. Notice that
the definition requires $p$ to be installable in $R$ as otherwise it
would vacuously depend on all the packages $q$ in the repository.  Due
to the complex nature of dependencies, there can be a huge gap with
the syntactic dependency graph as naively extracted from the metadata.

\PRETHM
\begin{example}[Direct and strong dependencies]
  In simple cases, conjunctive direct dependencies translate to
  identical strong dependencies whereas disjunctive ones vanish, as
  for the packages of the following repository:
  \begin{multicols}{3}
    \begin{lstlisting}[style=debctrl]
Package: p
Depends: q, r

Package: a
Depends: b | c
    \end{lstlisting}
    \columnbreak
    \xymatrix@C=1pc{
      & p\ar[dl]\ar[dr] &\\
      q & & r 
    }
    \columnbreak
    \xymatrix@C=1pc{
      & a\ar^<<<{\bigvee}[dl]\ar[dr] &\\
      b & & c 
    }
  \end{multicols}

  We have that $p\DEP q, p\DEP r$ and $p\SDEP q, p\SDEP r$ (because
  $p$ cannot be installed without either $q$ or $r$), and that $a\DEP
  b, a\DEP c$ whereas $a\not\SDEP b, a\not\SDEP c$ (because $a$ does not
  forcibly require neither $b$ nor $c$).  In general however, the
  situation is much more complex, like in the following repository:

  \begin{multicols}{2}
    \begin{lstlisting}[style=debctrl]
Package: p
Depends: q | r

Package: r
Conflicts: p

Package: q
    \end{lstlisting}
    \columnbreak
    \xymatrix@C=1pc{
      & p\ar^<<<{\bigvee}[dl]\ar[dr] &\\
      q & & r\ar@/_/@{.}[ul]_\# 
    }
  \end{multicols}
  Notice that $p\SDEP q$ in spite of $q$ not being a conjunctive
  dependency of $p$, and $r$ is incompatible with $p$, despite the
  fact that $p\DEP r$.
\end{example}
\POSTTHM

\PRETHM
\begin{proposition}[Transitivity]
  If $p\SDEP_R q$ and $q\SDEP_R r$ then $p\SDEP_R r$.
\end{proposition}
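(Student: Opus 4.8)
The plan is to unwind the definition of strong dependency directly in terms of healthy installations. Recall that $p \SDEP_R q$ means two things simultaneously: (i) there is \emph{some} healthy installation of $R$ containing $p$, and (ii) \emph{every} healthy installation of $R$ containing $p$ also contains $q$. So to prove $p \SDEP_R r$ I must exhibit a healthy installation containing $p$, and show that every healthy installation containing $p$ also contains $r$.

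First I would handle the existence clause. By hypothesis $p \SDEP_R q$ gives a healthy installation $I$ of $R$ with $p \in I$; this same $I$ witnesses clause (i) for $p \SDEP_R r$, so nothing more is needed here. (Note we do not even need the witness coming from $q \SDEP_R r$.) Next, for the universal clause, let $I$ be \emph{any} healthy installation of $R$ with $p \in I$. Since $p \SDEP_R q$, clause (ii) applied to $I$ yields $q \in I$. Now $I$ is a healthy installation of $R$ containing $q$, so clause (ii) of the hypothesis $q \SDEP_R r$ applies to $I$ and gives $r \in I$. Since $I$ was arbitrary among healthy installations containing $p$, this establishes clause (ii) for $p \SDEP_R r$. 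Combining the two clauses gives $p \SDEP_R r$.

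There is essentially no hard part here: the statement is a two-step chase through the quantifiers in Definition~\ref{def:strongdep}, and the only thing one must be slightly careful about is not to confuse the existential and universal witnesses — in particular, the installability witness for the conclusion is inherited from $p\SDEP_R q$, while the propagation $p \leadsto q \leadsto r$ uses the universal halves of both hypotheses. The one modelling point worth a sentence of justification is that an installation witnessing $p \in I$ and healthiness is, \emph{as a set}, also an installation witnessing $q \in I$ once we know $q \in I$; this is immediate since "healthy installation containing $x$" just means a healthy installation $I$ with $x \in I$, with no further dependence on $x$. Hence transitivity follows.
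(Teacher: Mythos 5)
Your proof is correct and is exactly the argument the paper intends: the paper simply states the result is ``Trivial from Definition~\ref{def:strongdep}'', and your explicit quantifier chase (inheriting the installability witness from $p\SDEP_R q$ and propagating membership through both universal clauses) is the standard way to fill in that triviality. Nothing further is needed.
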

\begin{proof}
  Trivial from Definition~\ref{def:strongdep}.
\end{proof}
\POSTTHM

On top of the strong and direct dependency notions, we can define the
corresponding \emph{dependency graphs}.

\PRETHM
\begin{definition}[Dependency graphs]
  The \emph{strong dependency graph} $SG(R)$ of a repository $R$ is
  the directed graph having as vertices the packages in $R$ and as
  edges all pairs $\langle p,q\rangle$ such that $p\SDEP q$. Note that
  the $SG(R)$ is transitively closed as direct consequence as the
  transitivity of the strong dependency relation.

  Similarly, the \emph{direct dependency graph} $DG(R)$ is the
  directed graph having as vertices the packages in $R$ and as edges
  all pairs $\langle p,q\rangle$ such that $p\DEP q$.
\end{definition}
\POSTTHM

The dependency graphs can be used to formalise, via the notion of
\emph{impact set}, the intuitive notion of the set of packages which
are potentially affected by changes in a given package.
\PRETHM
\begin{definition}[Impact set of a component]
  Given a repository $R$ and a package $p$ in $R$, the \emph{impact
  set} of $p$ in $R$ is the set $\IS{p}{R}=\{q\in R ~|~ q \SDEP p \}$.

  Similarly, the \emph{direct impact set} of $p$ is the set
  $\DIS{p}{R}=\{q\in R ~|~ q \DEP p \}$.
\end{definition}
\POSTTHM

While the impact set gives a sound lower bound to the set of packages
which can be potentially affected by a change in a package, the direct
impact set offers no similar guarantees. Note that by
Definition~\ref{def:strongdep}, for all package $p$, $p\in\IS{p}{R}$.
Package sensitivity---a measure of how sensitive is a package, in
terms of how many other packages can be affected by a change in
it---can now be defined as follows.
\PRETHM
\begin{definition}[Sensitivity]
  The strong sensitivity, or simply \emph{sensitivity}, of a package
  $p\in R$ is $|\IS{p}{R}|-1$, i.e., the cardinality of the impact set
  minus 1.\footnote{The $-1$ accounts for the fact that the impact set
    of a package always contains itself. This way we ensure that
    sensitivity 0 preserves the intuitive meaning of ``no package
    potentially affected''.}

  Similarly, the \emph{direct sensitivity} is the cardinality of the
  direct impact set.
\end{definition}
\POSTTHM

The higher the sensitivity of a package $p$, the higher the
\emph{minimum} number of packages which will be potentially affected
by a change, such as a new bug, introduced in $p$. We write \DSENS{p}
and \SENS{p} to denote the direct and strong sensitivity of package
$p$, respectively. The following basic property of impact sets and
sensitivity follows easily from the definitions.

\PRETHM
\begin{proposition}[Inclusion of impact sets]
  If $p\SDEP_R q$ then $\IS{p}{R}\subseteq \IS{q}{R}$. As a
  consequence, the sensitivity of $p$ in $R$ is smaller than the
  sensitivity of $q$ in $R$.
\end{proposition}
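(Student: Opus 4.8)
The plan is to obtain the inclusion of impact sets as an immediate corollary of the Transitivity proposition proved just above, and then to read off the sensitivity comparison by passing to cardinalities.

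First I would unfold the definition of the impact set: a package $s\in R$ lies in $\IS{p}{R}$ exactly when $s\SDEP_R p$. So fix an arbitrary $s\in\IS{p}{R}$, i.e.\ assume $s\SDEP_R p$. By hypothesis $p\SDEP_R q$, so Transitivity applied to $s\SDEP_R p$ and $p\SDEP_R q$ yields $s\SDEP_R q$, that is $s\in\IS{q}{R}$. Since $s$ was arbitrary, this proves $\IS{p}{R}\subseteq\IS{q}{R}$. For the second claim I would then simply observe that an inclusion of finite sets entails the corresponding inequality of cardinalities, hence $|\IS{p}{R}|\le|\IS{q}{R}|$ and therefore $\SENS{p}=|\IS{p}{R}|-1\le|\IS{q}{R}|-1=\SENS{q}$.

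I do not expect a genuine obstacle: the delicate feature of Definition~\ref{def:strongdep} is the installability side condition (there must exist a healthy installation containing the package in question), but that condition is already folded into what the Transitivity proposition guarantees, so no separate verification is needed here. The only point worth flagging is that the conclusion is really a non-strict inequality rather than a strict one---equality can occur, for instance whenever $q\SDEP_R p$ also holds, so that $\IS{p}{R}=\IS{q}{R}$---and if a strictly smaller sensitivity were intended the hypotheses would have to be strengthened accordingly.
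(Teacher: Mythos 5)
Your argument is correct and is exactly the intended one: the paper states this proposition without proof (remarking only that it ``follows easily from the definitions''), and unfolding $\IS{p}{R}$ and applying the Transitivity proposition, then passing to cardinalities, is precisely that easy route. Your remark that the sensitivity comparison is a non-strict inequality (with equality when $q\SDEP_R p$ as well) is an accurate reading of the paper's loosely worded ``smaller than''.
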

\POSTTHM

When analysing a large component base, like Debian's, which contains
about 22'000 components, it is important to be able to identify some
measure that can be used to easily pinpoint ``interesting'' packages.
Sensitivity can be (and actually is, in our tools) used to order
packages, bringing the most sensitive to the forefront. But
sensitivity alone is not enough: we do not want to spend time going
through hundreds of packages with similar sensitivity to find the one
which is really important, so we need to keep some of the structure of
the strong dependency graph.

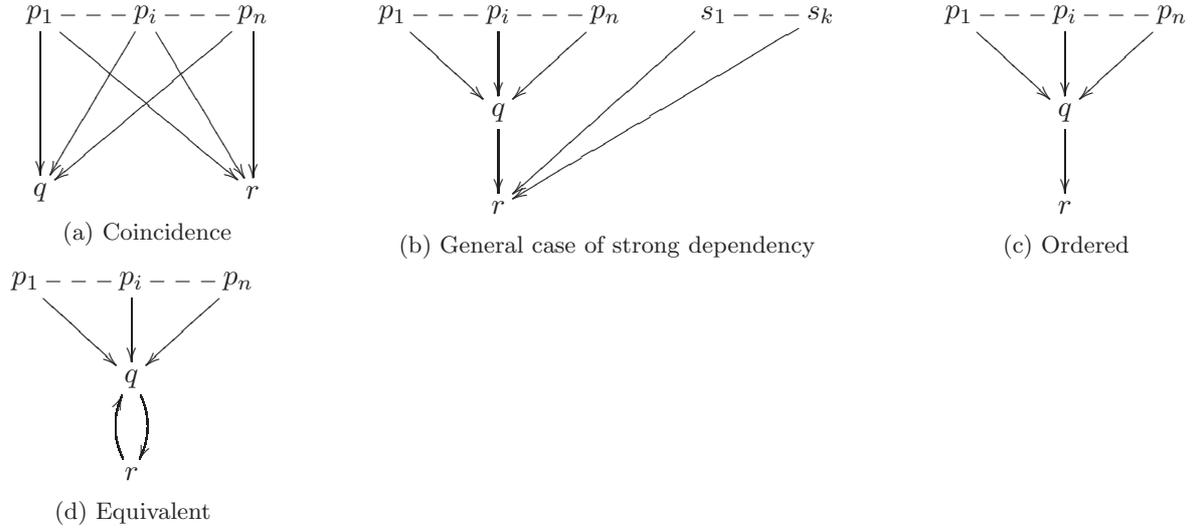
\begin{figure*}[t!]
  \hspace{-0.8cm}
\subfloat[Coincidence]{
\xymatrix{
p_1\ar[dd]\ar[ddrr]\ar@{--}[r] & p_i\ar[ddr]\ar[ddl]\ar@{--}[r]& p_n\ar[ddll]\ar[dd]\\
\\
q    &  & r\\
}
\label{f:unrelated}
}
\subfloat[General case of strong dependency]{
 \xymatrix{
 p_1\ar[dr]\ar@{--}[r] & p_i\ar[d]\ar@{--}[r]& p_n\ar[dl]&s_1\ar[ddll]\ar@{--}[r] & s_k\ar[ddlll]\\
     & q\ar[d] & \\
     & r &
 }
\label{f:strongdep}
}
\subfloat[Ordered]{
 \xymatrix{
 p_1\ar[dr]\ar@{--}[r] & p_i\ar[d]\ar@{--}[r]& p_n\ar[dl]\\
     & q\ar[d] & \\
     & r &
 }
\label{f:ordered}
}
\subfloat[Equivalent]{
 \xymatrix{
 p_1\ar[dr]\ar@{--}[r] & p_i\ar[d]\ar@{--}[r]& p_n\ar[dl]\\
     & q\ar@/^/[d] & \\
     & r\ar@/^/[u] &
 }
\label{f:equivalent}
}
\caption{Some significant configurations in the strong dependency
  graph}\label{f:peculiarcases}
\end{figure*}

A first step is to group together only those packages that are related
by strong dependencies, but our analysis of the Debian distribution
led us to discover that we really need to go further and distinguish
the cases of related components in the strong dependency graph from
the cases of unrelated ones: in the picture in
figure~\ref{f:peculiarcases}\footnote{Edges implied by transitivity
  are omitted from the diagrams for the sake of clarity.},
configuration~\ref{f:ordered} shows $q$ that clearly dominates $r$, as
the impact set of $r$ really comes from that of $q$, in
configuration~\ref{f:equivalent}, $q$ and $r$ are clearly equivalent,
while in configuration~\ref{f:unrelated}, $q$ and $r$ are totally
unrelated, and in configuration~\ref{f:strongdep}, $q$ strong depends
on $r$ but $q$ does not generate all the impact set of $r$.

Yet, the packages $q$ and $r$ all have essentially the same
sensitivity values ($n$ or $n+1$) in all the first three cases (and
$n+k$ in the fourth, which can also contribute to the mass of packages
of sensitivity similar to $n$).  To distinguish these different
configurations in strong dependency graphs, we introduce one last
notion.

\PRETHM
\begin{definition}[Strong dominance]\label{def:dominance}
Given two packages $p$ and $q$ in a repository $R$, we say that $p$
strongly dominates $q$ (\SDOM{p}{q}) iff
\begin{itemize}
\item $\IS{p}{R} \supseteq (\IS{q}{R}\setminus \SCONS{p})$, and
\item $p$ strongly depends on $q$
\end{itemize}
\end{definition}
\POSTTHM

The intuition of strong dominance, is that a package $p$ dominates
$q$ if the strong dependency of $p$ on $q$ \emph{explains} the impact
set of $q$: the packages that $q$ has an impact on are really those
that $p$ has an impact on, plus $p$. This notion has some similarity
in spirit with the standard notion of dominance used in control flow
graphs, but is technically quite different, as strong dependency 
graphs are transitive, and have no start node.

Using the transitivity of strong dependencies, the following can be
established.

\begin{proposition}
  The strong domination relation is a partial pre-order.
\end{proposition}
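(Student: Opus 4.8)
The plan is to check the two defining properties of a pre-order directly from Definition~\ref{def:dominance}, drawing only on the transitivity of $\SDEP_R$ (the Proposition above) together with elementary inclusions between sets. Reflexivity is immediate: any package $p$ that can stand on either side of $\succcurlyeq_{Is}$ must be installable---that is precisely the non-vacuity requirement built into $\SDEP_R$---and then $p\SDEP_R p$ holds because every healthy installation containing $p$ trivially contains $p$, while $\IS{p}{R}\supseteq\IS{p}{R}\setminus\SCONS{p}$ is obvious. Hence $\SDOM{p}{p}$ for every installable $p$; since impact sets and sensitivity are only meaningful for installable packages, this restriction of the domain is harmless (and is arguably what "partial" refers to).

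The real content is transitivity. Assume $\SDOM{p}{q}$ and $\SDOM{q}{r}$. The strong-dependency clause of the conclusion, $p\SDEP_R r$, is handed to us by transitivity of $\SDEP_R$. For the impact-set clause I would pick an arbitrary $x\in\IS{r}{R}\setminus\SCONS{p}$ and chase it into $\IS{p}{R}$. The first step is to observe that $x\notin\SCONS{q}$: if $q\SDEP_R x$ held, then from $p\SDEP_R q$ and transitivity we would get $p\SDEP_R x$, i.e. $x\in\SCONS{p}$, contradicting the choice of $x$. Hence $x\in\IS{r}{R}\setminus\SCONS{q}$, which by $\SDOM{q}{r}$ lies in $\IS{q}{R}$; since moreover $x\notin\SCONS{p}$, we get $x\in\IS{q}{R}\setminus\SCONS{p}$, which by $\SDOM{p}{q}$ lies in $\IS{p}{R}$. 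As $x$ was arbitrary, $\IS{p}{R}\supseteq\IS{r}{R}\setminus\SCONS{p}$, and together with $p\SDEP_R r$ this yields $\SDOM{p}{r}$.

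I do not expect a genuine obstacle here; the one place that needs attention is that ruling out $x\in\SCONS{q}$ uses the strong-dependency clause $p\SDEP_R q$ of the hypothesis $\SDOM{p}{q}$, not merely the impact-set inclusions---dropping that clause would break the chase, and this is where the asymmetry between the two bullets of Definition~\ref{def:dominance} is felt. It is also worth recording, to justify "pre-order" rather than "order", that the relation is not antisymmetric: configuration~\ref{f:equivalent} of Figure~\ref{f:peculiarcases} exhibits distinct packages $q$ and $r$ with both $\SDOM{q}{r}$ and $\SDOM{r}{q}$, so antisymmetry genuinely fails.
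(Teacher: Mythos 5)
Your proof is correct and follows essentially the same route as the paper's: the strong-dependency clause of $\SDOM{p}{r}$ comes from transitivity of $\SDEP$, and the impact-set clause rests on the observation that $p\SDEP q$ forces $\SCONS{q}\subseteq\SCONS{p}$ (you prove its contrapositive when ruling out $x\in\SCONS{q}$), after which the two inclusion hypotheses are chained together --- the paper does this at the level of sets, you by chasing an element, but the content is identical. Your extra remarks on reflexivity requiring installability and on the failure of antisymmetry are accurate refinements of what the paper dismisses as trivial.
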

\begin{proof}
  Reflexivity is trivial to check.  For transitivity, suppose we have
  \SDOM{p}{q} and \SDOM{q}{r}: first of all, $p$ strongly depends on
  $r$ is a direct consequence of the fact that the strong dependency
  relation is transitive, so the second condition for \SDOM{p}{r} is
  established. For the first condition, we know that $\IS{p}{R}
  \supseteq ( \IS{q}{R}\setminus \SCONS{p})$ and $\IS{q}{R}\supseteq
  (\IS{r}{R}\setminus \SCONS{q})$. By transitivity of strong
  dependencies, since $p\SDEP q\SDEP r$, we also have that
  $\SCONS{p}\supseteq\SCONS{q}\supseteq\SCONS{r}$.  Then we have
  easily that $\IS{p}{R}\supseteq(\IS{q}{R}\setminus \SCONS{p})
  \supseteq (\IS{r}{R}\setminus \SCONS{q})\setminus \SCONS{p} =
  \IS{r}{R}\setminus \SCONS{p}$.
\end{proof}
\POSTTHM

This pre-order is now able to distinguish among the cases of
Figure~\ref{f:peculiarcases}. In Figure~\ref{f:ordered} we have that
$\SDOM{q}{r}$, but not the converse; in~\ref{f:equivalent} both
$\SDOM{q}{r}$ and $\SDOM{r}{q}$ hold, i.e., $q$ and $r$ are equivalent
according to strong domination; in~\ref{f:unrelated}
and~\ref{f:strongdep} no dominance relationship can be established
between $q$ and $r$.

It is possible, and actually quite useful, to generalise the strong
dominance relation to cover also the case shown in~\ref{f:strongdep},
where a part of the impact set of the package $r$ is not covered by
the impact set of $q$, as follows.

\PRETHM
\begin{definition}[Relative strong dominance]\label{def:reldominance}\mbox{}\\
Given two packages $p$ and $q$ in a repository $R$, we say that $p$
strongly dominates $q$ up to $z$ (\SDOMUPTO{p}{q}{z}) iff
\begin{itemize}
\item ${{|(\IS{q}{R}\setminus \SCONS{p}) \setminus \IS{p}{R}|}\over{|\IS{p}{R}|}}*100 = z $, and
\item $p$ strongly depends on $q$
\end{itemize}
\end{definition}
\POSTTHM

It is easy to see that $\SDOM{p}{q}$ iff $\SDOMUPTO{p}{q}{0}$, and one
can compute in a single pass on the repository the values $z$ for each
pair of packages such that $p\SDEP q$, leaving for later the choice of
a threshold value for $z$. In the case of figure~\ref{f:strongdep}, we
have that $q$ dominates $r$ up to $k/n*100$.

We have computed strong dominance graphs for state of the art \FOSS{}
distributions, obtaining concise visual representations of clusters of
packages intertwined by strong dependencies;
Appendix~\ref{sec:sample-dominators} contains some of those graphs.

\section{Strong dependencies in Debian}
\label{sec:measures}

Due to the different properties of direct and strong dependencies,the
two measures of package sensitivity can differ substantially. To
verify that, as well as other properties of the underlying dependency
graphs, we have chosen Debian GNU/Linux as a case study.\footnotemark
The choice is not casual: Debian is the largest \FOSS{} distribution
in terms of number of packages (about $22'000$ in the latest stable
release) and, to the best of our knowledge, the largest
component-based system freely available for study.

\footnotetext{The data presents in this section, as well as what was
  omitted due to space constraints, are available to download from
  \url{http://www.mancoosi.org/data/strongdeps/}. The tools used to
  compute the data are released under open source licenses and are
  available from the Subversion repository at
  \url{https://gforge.info.ucl.ac.be/svn/mancoosi}.}

\begin{figure}
  \centering
  \includegraphics[width=0.75\columnwidth]{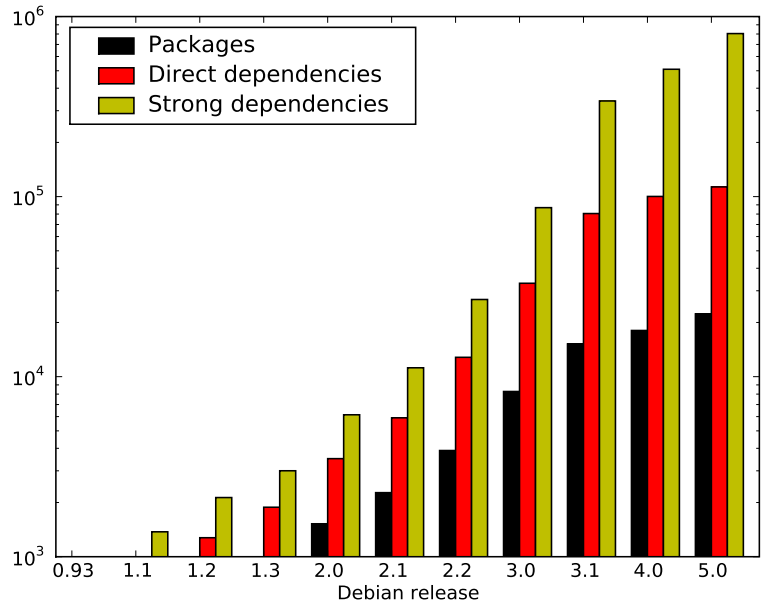}
  \caption{Evolution of packages, direct, and strong dependencies
    across Debian releases.\strut}
  \label{fig:debian-history}
\end{figure}

All stable releases of Debian have been considered, from 1994 to
February 2009. For each release the archive section \texttt{main} and
in particular the \texttt{i386} architecture has been considered; the
choices are justified by the fact that they identify both the most
used parts of Debian,\footnote{According to the Debian popularity
  contest, available at \url{http://popcon.debian.org}} and that they
are the only parts which have been part of all Debian releases and
hence can be better compared across years. The obtained archive parts
have been analysed by building both the direct and strong dependency
graphs; while the construction of the former is a trivial exercise,
the implemented efficient way of constructing the latter is discussed
in Section~\ref{sec:algo}. To build the direct dependency graph the
\texttt{Depends} and \texttt{Pre-Depends} inter-package relationships
have been
considered~\cite{debian-policy}.

Figure~\ref{fig:debian-history} shows the resulting evolution of the
number of graph nodes and edges across all Debian releases.  The size
of the distribution has grown steadily, yet super-linearly, across
most releases~\cite{robles-mining-compilations,baraona-spl}, but the
growth rate has decreased in the past two releases. As expected,
strong and direct sensitivity are not entirely unrelated, given that
the former is the semantic view of the latter, hence they tend to grow
together.

More precisely the total number of strong dependencies is higher, in
all releases, than the total number of direct dependencies. A partial
explanation comes from the fact that the strong dependency graph is a
transitive closed graph---property inherited by the underlying strong
dependency relationship---whereas the direct dependency graph is
not. Performing the transitive closure of the direct dependency graph
however would be meaningless, because the propagation rules of
disjunctive and conjunctive dependencies are not expressible simply in
terms of transitive arcs.

\begin{table}[b!]
  \centering
  \caption{Direct and strong sensitivity across Debian releases:
    correlation, mean, standard deviation.}
  \label{tab:sensitivity-stats}
  \begin{tabular}{c|c|c|c|c}
    \textbf{Rel.} & $\rho$
     & \multicolumn{1}{c|}{\DSENS{\cdot}}
     & \multicolumn{1}{c|}{\SENS{\cdot}}
     & \multicolumn{1}{c}{$\Delta$} \\
    \hline                                                                      
    \texttt{0.93} & 0.92 & 1.00,  $\sigma$2.79 &  1.05,   $\sigma$4.73 & 1.00,  $\sigma$4.00 \\
    \texttt{1.1}  & 0.93 & 1.70, $\sigma$13.91 &  2.90,  $\sigma$25.96 &  1.88, $\sigma$18.56 \\
    \texttt{1.2}  & 0.91 & 1.79, $\sigma$18.43 &  2.99,  $\sigma$32.27 &  1.73,  $\sigma$22.42 \\
    \texttt{1.3}  & 0.91 & 1.92, $\sigma$21.95 &  3.06,  $\sigma$38.24 &  1.69,  $\sigma$25.85 \\
    \texttt{2.0}  & 0.93 & 2.29, $\sigma$26.73 &  4.03,  $\sigma$50.88 &  2.50,  $\sigma$36.56 \\
    \texttt{2.1}  & 0.94 & 2.60, $\sigma$34.92 &  4.93,  $\sigma$64.55 &  2.93,  $\sigma$46.61 \\
    \texttt{2.2}  & 0.92 & 3.29, $\sigma$44.24 &  6.89,  $\sigma$90.47 &  4.88,  $\sigma$68.76 \\
    \texttt{3.0}  & 0.92 & 3.99, $\sigma$59.25 & 10.49, $\sigma$131.5  &  8.02,  $\sigma$92.34 \\
    \texttt{3.1}  & 0.92 & 5.29, $\sigma$91.42 & 22.36, $\sigma$282.0  & 19.39 , $\sigma$246.4 \\
    \texttt{4.0}  & 0.92 & 5.55, $\sigma$85.10 & 28.23, $\sigma$352.4  & 24.50 , $\sigma$313.9 \\
    \texttt{5.0}  & 0.93 & 5.07, $\sigma$86.16 & 36.05, $\sigma$480.3  & 32.55 , $\sigma$440.1 \\
  \end{tabular}
\end{table}

We have studied the apparent correlation between strong and direct
dependencies analysing the respective sensitivity measures for each
release. Table~\ref{tab:sensitivity-stats} confirms the correlation
and gives some statistical data about package sensitivity. The first
column is the Spearman $\rho$ correlation index,\footnote{The
  statistical info for the first two rows are possibly not relevant,
  due to the small size of the two releases.} a commonly used
non-parametric correlation index that is not sensible to exceptional
values~\cite{fenton_software-metrics}. An
index between 0.5 and 1.0---in all the releases we have $\rho\in
[0.91,0.94]$---is commonly interpreted as a strong correlation between
the two variables. The more common correlation index $r$ for the same
set of data (not shown in the table) gives consistently a value of $0.55$: 
the huge difference among $\rho$ and $r$ indicates that the few exceptional
values in the data series have really high weight, and we will see, when
looking at some of these exceptional value that this is indeed the case.

The remaining columns show mean and standard
deviation for, respectively, direct sensitivity, strong sensitivity,
and $\Delta = \SENS{p} - \DSENS{p}$. In particular we note an
increasingly high standard deviation in latest Debian releases, which
hints that there is an increasing number of peaks.


\begin{figure}[t!]
  \center
  \includegraphics[width=0.80\columnwidth]{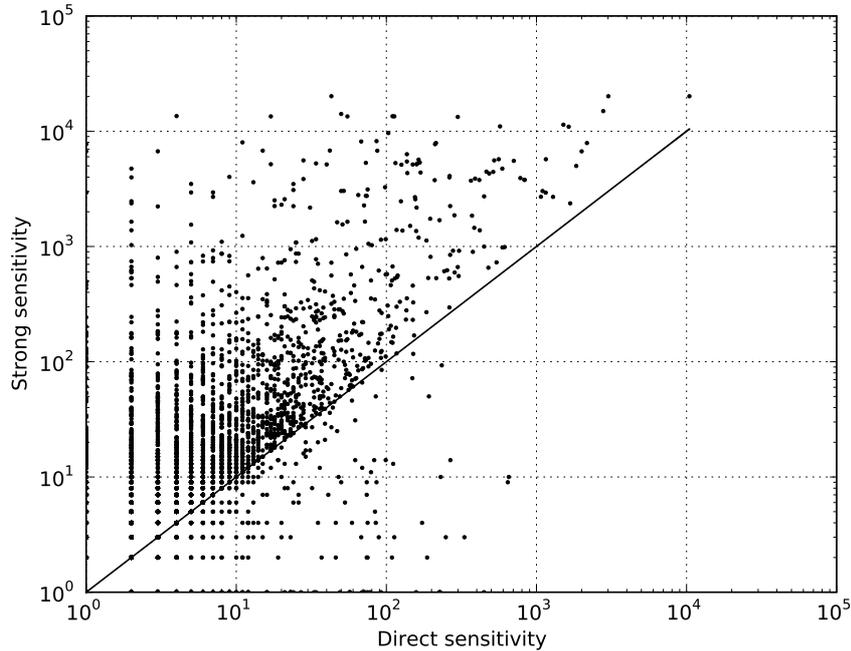}
  \caption{Correlation between strong and direct sensitivity in Debian
    5.0 ``Lenny''.\strut}
  \label{fig:lenny-correlation}
\end{figure}

Figure~\ref{fig:lenny-correlation} shows in more detail the
correlation phenomenon for Debian 5.0 ``Lenny'', the latest and
largest Debian release. The figure plots strong vs direct dependencies
for each package in the release. In most cases, strong sensitivity is
higher than direct sensitivity, yet close: 82.9\% of the packages fall
in a standard deviation interval from the mean of $\Delta$; the next
percentile ranks are 97.4\% for two standard deviations, and 99.8\%
for three. The remaining cases allow for important exceptions of
packages with very high strong sensitivity and very low direct
sensitivity. Such exceptions are extremely relevant: metrics built on
direct sensitivity only would totally overlook packages with a huge
potential impact.

\begin{table}[t!]
  \centering
  \caption{Packages from Debian 5.0, sorted by the difference between
    strong and direct impact sets.}
  \label{tab:lenny-underestimate}
  \begin{tabular}{r|l|r|r|r}
    \multicolumn{1}{c|}{\#}
     & \multicolumn{1}{c|}{\textbf{Package}}
     & \multicolumn{1}{c|}{\DSENS{p}}
     & \multicolumn{1}{c|}{\SENS{p}}
     & \multicolumn{1}{c}{$\SENS{p}-\DSENS{p}$} \\
    \hline
    \INCR & \texttt{gcc-4.3-base} & 43 & 20128 & 20085 \\
    \INCR & \texttt{\texttt{libgcc1}} & 3011 & 20126 & 17115 \\
    \INCR & \texttt{libselinux1} & 50 & 14121 & 14071 \\
    \INCR & \texttt{lzma} & 4 & 13534 & 13530 \\
    \INCR & \texttt{coreutils} & 17 & 13454 & 13437 \\
    \INCR & \texttt{dpkg} & 55 & 13450 & 13395 \\
    \INCR & \texttt{libattr1} & 110 & 13489 & 13379 \\
    \INCR & \texttt{libacl1} & 113 & 13467 & 13354 \\
    \INCR & \texttt{perl-base} & 299 & 13310 & 13011 \\
    \INCR & \texttt{libstdc++6} & 2786 & 14964 & 12178 \\
    \INCR & \texttt{libncurses5} & 572 & 11017 & 10445 \\
    \INCR & \texttt{debconf} & 1512 & 11387 & 9875 \\
    \INCR & \texttt{libc6} & 10442 & 20126 & 9684 \\
    \INCR & \texttt{libdb4.6} & 103 & 9640 & 9537 \\
    \INCR & \texttt{zlib1g} & 1640 & 10945 & 9305 \\
    \INCR & \texttt{debianutils} & 86 & 8204 & 8118 \\
    \INCR & \texttt{libgdbm3} & 68 & 8148 & 8080 \\
    \INCR & \texttt{sed} & 11 & 8008 & 7997 \\
    \INCR & \texttt{ncurses-bin} & 1 & 7721 & 7720 \\
    \INCR & \texttt{perl-modules} & 214 & 7898 & 7684 \\
    \INCR & \texttt{lsb-base} & 211 & 7720 & 7509 \\
    \INCR & \texttt{libxdmcp6} & 15 & 6782 & 6767 \\
    \INCR & \texttt{libxau6} & 42 & 6795 & 6753 \\
    \INCR & \texttt{libx11-data} & 1 & 6693 & 6692 \\
    \INCR & \texttt{libxcb-xlib0} & 3 & 6695 & 6692 \\
    \INCR & \texttt{libxcb1} & 87 & 6778 & 6691 \\
    \INCR & \texttt{x11-common} & 137 & 6317 & 6180 \\
    \INCR & \texttt{perl} & 2169 & 7898 & 5729 \\
    \INCR & \texttt{libmagic1} & 28 & 5585 & 5557 \\
    \INCR & \texttt{libpcre3} & 164 & 5668 & 5504 \\
    \multicolumn{5}{c}{\ldots} \\
  \end{tabular}
\end{table}

\subsection{Strong vs direct sensitivity: exceptions}

It's time now to look at some of these exceptional cases to see how
relevant they are. Table~\ref{tab:lenny-underestimate} lists the top
30 packages of Lenny having the largest $\Delta$.

\PKG{libc6} is the package shipping the C standard library which is
required, directly or not, by almost all applications written or
otherwise linked to the C programming language. About a half of all
the packages in the distribution depends \emph{directly} on
\PKG{libc6}, as can be seen in row 13 of the table, but almost all
packages in the archive cannot be installed without it, as the strong
sensitivity of \PKG{libc6} is 20'126, on a total of 22'311
packages. In this case direct sensitivity does not inhibit identifying
the package as a sensitive one, though, even if it underestimates
widely its importance.

Now consider row 1 of Table~\ref{tab:lenny-underestimate}:
\PKG{gcc-4.3-base}, which is a package without which \PKG{libc6}
cannot be installed. It is the package with the largest $\Delta$,
having direct sensitivity of only 43 and strong sensitivity of
20'128. Ranking its sensitivity with the direct metric would have led
to completely miss its importance: a bug into it can potentially
affect all packages in the distribution. Note however that
\PKG{gcc-4.3-base} is not a direct dependency of \PKG{libc6}, showing
once more that to grasp this kind of inter-package relationships the
semantics, rather than the syntax, of dependencies must be put into
play.

In the second row, \PKG{libgcc1} shows a similar pattern, being this
time a direct dependency of \PKG{libc6}. The third row and many others
in the table show more complex patterns. Ordering packages only
according to sensitivity might lead to oversee other important
characteristic. Possibly the most extreme cases are those of
\PKG{ncurses-bin} and \PKG{libx11-data}, which are mentioned just once
in all the explicit dependencies, and yet are really necessary for
several thousand other packages.

We believe this is sufficiently conclusive evidence to totally dismiss,
from now on, any analysis based on the syntactic, direct dependency graph,
when considering component based systems with complex dependency languages.
  
\subsection{Using strong dominance to cluster data}

Now we turn to the problem of presenting the sensitiveness information in
a relevant way to a Quality Assurance team: we could simply print a list
of package names, ordered by their sensitiveness; this would give a result
quite similar to that of table~\ref{tab:lenny-underestimate} above, just 
dropping the first and fourth column. A smart Debian maintainer will surely
spot the fact that \PKG{gcc-4.2-base}, \PKG{libgcc1} and \PKG{libc6} are
related and would look at them together, but it would be difficult to 
see relationships among the other packages in the list, even if we can
see that many packages have impact sets of similar size.\\

\begin{figure*}[t!]
  \includegraphics[width=0.98\textwidth]{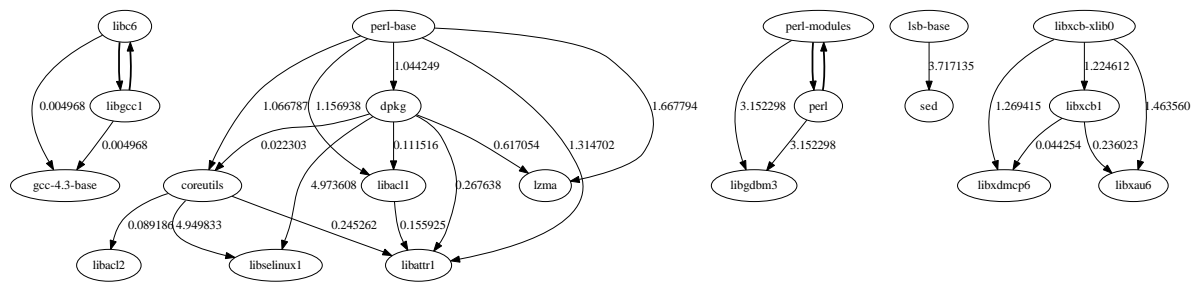}
  \caption{Dominance relations among the topmost 20 sensitive packages}
  \label{fig:clustering}
\end{figure*}

Here is where our definition of relative strong dominance comes into
play, allowing to build meaningful clusters that provide sensible
information to the maintainers: Figure~\ref{fig:clustering} shows the
graph of relative strong domination between the first $20$ packages of
Table~\ref{tab:lenny-underestimate}. Bold edges show strong domination
as defined in Definition~\ref{def:dominance}.  Normal edges show
relative domination, where the install sets of the two packages almost
fully overlap, apart from a few packages (edges are labelled with the
percentage $z$ of Definition~\ref{def:reldominance}).

This figure shows clearly that it is possible to isolate five clusters of
related packages with similar sensitivity values; some of them may look
surprising at first sight to a Debian maintainer, and evident after a
little time spent exploring the package metadata: this actually confirms
the real value of this way of presenting data.

\subsection{Debian is a small world}

We expected the strong dependency graph to retain the small world
characteristics previously established for the direct dependency
graph~\cite{labelle_os_networks}, but this required some extra effort
to get sensible results: indeed, computing clustering coefficients and
other similar measures on the strong dependency graph will yield very
different values (as the strong dependency graph is transitive), so we
first built a detransitivised version of the strong dependency graph,
and computed the usual small world measures on it.

Note that, since the strong dependency graph contains some cycles, the
graph obtained by detransitivising it is not unique. The differences
are however minor enough to not alter the overall results.

\begin{table}[b!]
  \label{table:swc}
  \caption{Small-world characteristics for Debian 5.0.}
  \centering
  \begin{tabular}{l|r|r}
     & \multicolumn{1}{p{5cm}|}{\textbf{Direct dependency graph}}
     & \multicolumn{1}{p{5cm}}{\textbf{Strong depedency graph}} \\
    \hline
    \emph{Vertices} & 22 311 & 22 311 \\
    \emph{Edges} & 107 796 & 40 074 \\
    \emph{Average degree} & 4.83 & 1.80 \\
    \hline
    \emph{Clustering coeff.} & 0.41 & 0.39 \\
    \emph{Average distance} & 3.18 & 2.86 \\
    \hline
    \emph{Components (WCCs)} & 1425 & 2809 \\
    \emph{Largest WCC} & 20 831 & 19 200 \\
    \hline
    \emph{Density} & 0.00022 & 0.000081
  \end{tabular}
\end{table}

The clustering coefficient and average path length of the
detransitivised graph are, though slightly smaller, well within the
range of small-world networks. More than half the edges of the
syntactic graph have disappeared, but this has not significantly
affected either the graph clustering or the path length. The relevant
statistics are summarised in Table~\ref{table:swc}.

Some further notes from Table~\ref{table:swc}. First, both graphs
contain one enormous (weakly connected) component, next to which all
other components are of insignificant size (for the direct graph,
there are 1'480 remaining packages in 1'424 components, which would
make their average size just above 1; the ratio is similar for the
strong graph). Second, when we look at the density of both graphs (the
number of edges in the graph divided by the maximum possible number of
edges), we see that both graphs are extremely sparse.

\section{Efficient computation}
\label{sec:algo}

It is not evident that strong dependencies as defined in
Section~\ref{sec:strongdeps} are actually tractable in practise: from
previous results~\cite{EdosAse06,EDOS-FRCSS06} it is known that
checking installability of a package (or co-installability of a set of
packages) is an NP-complete problem. Even if in practise checking
installability turns out to be tractable on real-world problem
instances, the sheer number of instances that computing strong
dependencies may require in general makes the problem in principle
much harder. We start by observing that the problem of determining
strong dependencies is decidable.

\PRETHM
\begin{proposition}[Decidability]\label{prop:sddec}
  Strong dependencies for packages in a \emph{finite} repository $R$
  are \emph{computable}.
\end{proposition}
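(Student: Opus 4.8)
The plan is to prove the statement by brute-force enumeration, relying only on the finiteness of $R$ together with the fact that healthiness of a \emph{given} installation is itself a decidable property. First I would observe that, since $R$ is finite, its powerset $2^R$ is finite and can be effectively enumerated. For each candidate installation $I \subseteq R$, deciding whether $I$ is healthy amounts to checking, for every package $p \in I$, that the dependency formula of $p$ evaluates to true under the truth assignment making exactly the members of $I$ true, and that the conflict formula of $p$ evaluates to false under the same assignment; since these are finitely many evaluations of finite propositional formulae, each check terminates. Running this over all of $2^R$ produces the finite set $\mathcal{H}$ of all healthy installations of $R$.

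Next, for each ordered pair $\langle p,q\rangle \in R\times R$, I would decide $p\SDEP_R q$ directly from Definition~\ref{def:strongdep}: first test whether $\{I\in\mathcal{H} : p\in I\}$ is non-empty (installability of $p$), and if so test whether every such $I$ also contains $q$. Both are finite searches over the finite set $\mathcal{H}$, hence effective. This computes the whole relation $\SDEP_R$, and from it also $\SPREDS{p}_R$, $\SCONS{p}_R$, $\IS{p}{R}$ and the sensitivity values for every $p\in R$.

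There is no genuine obstacle here: the claim is about computability, not about tractability. The only point deserving care is that the notion of healthy installation borrowed from~\cite{EdosAse06} is \emph{local} — it constrains each package of an installation only against the installation itself, with no closure or minimality side condition — so that testing it reduces to plain evaluation of boolean formulae rather than to a further search. (The real difficulty, namely that the naive procedure just sketched is hopelessly expensive and that even a single installability check is already NP-complete, is precisely what the remainder of this section is devoted to addressing.)
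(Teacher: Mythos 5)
Your proof is correct and follows essentially the same route as the paper's: enumerate the finitely many installations of the finite repository $R$, check healthiness locally, and then decide each pair $p\SDEP_R q$ by inspecting the healthy installations containing $p$. You are in fact slightly more explicit than the paper in spelling out the installability test for $p$ and the local evaluation of the formulae, which is a welcome precision rather than a deviation.
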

\begin{proof}
  Since $R$ is finite, the set of all installations is also finite.
  Among these installations, finding the healthy one is just a matter
  of verifying locally the dependency relations. Then, for each $p$
  and $q$, it is enough to check all healthy installations to see
  whether $q$ is present whenever $p$ is.
\end{proof}
\POSTTHM

If we want to know if a particular packages $p$ strongly depends on
$q$ in a repository $R$ however, the argument used in the proof of
decidability leads to an algorithm that has exponential worst-case
complexity in the size $n$ of a repository $R$. One possible algorithm
to find \emph{all} strong dependencies in a repository $R$ is as
follows.

\begin{algorithmic}
\REQUIRE $R \neq \emptyset$
\STATE $\mathtt{strongdeps} \leftarrow \emptyset$
\FORALL {$p,q\in R$}
  \IF{$\mathtt{strong\_dependency}(p,q,R)$}
    \STATE $\mathtt{strongdeps} \leftarrow \mathtt{strongdeps} \cup \{p,q\}$
  \ENDIF
\ENDFOR
\textbf{return} $\mathtt{strongdeps}$
\end{algorithmic}

Where the function $\mathtt{strong\_dependency}$ uses a SAT solver to check
whether it is possible to install $p$ without installing $q$ (in repository
$R$). This algorithm requires checking $n^2$ SAT instances, which is unfeasible
with $n\approxeq 22'000$. We need to look for an optimised approach; the
following remark is the key observation.

\PRETHM
\begin{remark}[Reducing the search space]
  All packages $q$ on which a given package $p$ strongly depends are
  included in \emph{any} installation of $p$. Furthermore, if a
  package $p$ conjunctively depends on a package $q$, then $q$ is a
  strong dependency of $p$.
\end{remark}
\POSTTHM

This leads to the following improved algorithm that strongly relies on
the notion of installation sets and the property of transitivity of
strong dependencies. 

\begin{algorithmic}
\FORALL {$p \in R$}
  \STATE $\mathtt{strongdeps} \leftarrow \mathtt{strongdeps} \cup
  \mathtt{conj\_dependencies}(p,R)$
\ENDFOR
\FORALL {$p \in R$}
  \STATE $S \leftarrow \mathtt{install}(p,R)$
  \FORALL {$q \in S$}
    \IF {$(p,q) \not\in \mathtt{strongdeps} \land
      \mathtt{strong\_dependency}(p,q,R)$}
    \STATE $\mathtt{strongdeps} \leftarrow \mathtt{strongdeps} \cup \{p,q\}$
    \ENDIF
  \ENDFOR
\ENDFOR
\textbf{return} $\mathtt{strongdeps}$
\end{algorithmic}

The function $\mathtt{conj\_dependencies}(q,R)$ returns all packages
in $R$ that are connected to $q$, considering only conjunctive
paths. We add to the $\mathtt{strongdeps}$ set all couples $(p,q)$
such that there exists a conjunctive path between $p$ and $q$, and
then for all remaining packages in the install set of $p$, we check if
there is a strong dependency using the SAT solver.

On one hand, the analysis of the structure of the repositories shows
that it is in practice possible to find installation sets that are
quite small. Considering only the installation set for a given package
drastically reduces the number of calls to the SAT solver. On the
other hand, since the large majority of strong dependencies can be
derived directly from conjunctive dependencies, building the graph of
conjunctive dependencies beforehand can further reduce the computation
time.

In our experiments, calculating the strong dependency graph and
sensitivity index for about $22'000$ packages takes about 5 minutes on
a modern commodity Unix workstation.\footnote{Intel Xeon 3 GHz
  processor, 3 Gb of memory}

\section{Applications}
\label{sec:applications}

The given notions of strong dependency, impact set, sensitivity, and
strong dominance can be used to address issues showing up in the maintenance
of large component repositories. In particular, we have identified two
areas of application: repository-wide Quality Assurance (QA) and
upgrade risk evaluation for user machines.

\paragraph{Quality Assurance}
\FOSS{} distribution the size of Debian are not easily inspectable by
hand, without specific tools. The work of release managers in such
scenario is about maintaining a coherent package repository, i.e., in
which each package is installable in at least one healthy
installation. Such repositories are usually not built from scratch,
but rather evolve from an unstable state to a stable one which is
periodically released as the new major release of the
distribution. Day to day maintenance of the repository includes
actions such as adding packages to the repository (e.g., newly
packaged software, or new releases) as well as removing them (e.g.,
superseded softwares or sub-standard quality packages which are not
considered suitable for releasing). Quality assurance is meant to spot
repository-wide incompatibilities or sub-standard quality packages,
according to various criteria.

In such ecosystems, removing a package can have non-local effects
which are not evident by just looking at the direct dependencies of
the involved packages. For instance, removing a package $p$ such that
several packages depends on $p~|~q$ might be appropriate only if $q$
is installable in the archive. The strong dependency graph can be used
to detect similar cases efficiently. Once the graph has been
computed---and Section~\ref{sec:algo} showed that the cost is
affordable even for large distributions---detecting if a package is
removable in isolation reduces to check whether its node has inbound
edges or not. If really needed, following inbound edges can help
building sets of packages removable as a whole.

In the same context, sensitivity can be used to decide when to freeze
packages during the release process (decision currently delegated to
folklore): the higher the sensitivity, the sooner a package should be
frozen. Sensitivity can also be used to activate heuristic warnings in
archive management tools when apparently innocuous packages are acted
upon: attempting to remove or otherwise alter \PKG{gcc-4.3-base} at
the end of the Lenny release process (see
Table~\ref{tab:lenny-underestimate}) would have surely been an error,
in spite of the few packages mentioning it directly in their
dependencies.


\paragraph{Upgrade risk evaluation}
System administrators of machines running \FOSS{} distributions would
like to be able to judge the risks of a certain upgrade. Risk
evaluation not necessarily in the sense of deciding whether or not to
perform an upgrade---not performing one is often not an option, due to
the frequent case of upgrades that fix security vulnerability. Upgrade
risk evaluation is important to allocate suitable time slots to deploy
upgrade plans proposed by package managers: the riskier the upgrade,
the longer the time slot that should be planned for it.

The general principle we propose is that a package that is not
strongly depended upon by other packages is relatively safe to
upgrade; conversely, a package that is needed by many packages on the
system might need some safety measures in case of problems (backup
servers, \ldots). However this measure should be computed in relation
to the actual user installation and not as an absolute value with
respect to the distribution such as plain impact sets. Once the strong
dependency graph of a user installation has been computed, the legacy
package manager can be used to find upgrade plans as usual. On that
plan the overall upgrade sensitivity can then be computed by summing
up the size of the \emph{installation impact sets} of all packages
touched by the proposed plan; where the installation impact set of a
package $p$ is defined as the intersection of the strong impact set
with the local installation.

The strong dependency graph used for risk evaluation must be the one
corresponding to the distribution snapshot which was known
\emph{before} planning the upgrade. This is because we want to
evaluate the risks with respect to the current installation, not to a
future potential one in which package sensitivity can have
changed. The maintenance of such graph on user machines is
straightforward and can be postponed to after upgrade runs have been
completed, in order to be ready for future upgrades.

Note that in this way, what is computed is an under approximation of
the upgrade risk measure. For example consider the following scenario:
a package $p$ having \lstinline[style=debctrl]$Depends: q | r$, and a
healthy installation $I = \{p,q\}$. The direct dependencies of $p$
entail no strong dependency, but in the given installation $q$ has
been ``chosen'' to solve $p$ dependencies. Even if $p \not\in
\IS{q}{R}\cap I$, an upgrade of $q$ in that specific installation has
potentially an impact on $p$. The under approximation is nevertheless
sound---i.e., all packages in the installation impact set are
installed.

\paragraph{Release upgrades}
A particular case of upgrade are the so called \emph{release upgrades}
(or distribution upgrades) which are performed periodically to switch
from an older stable release of a given distribution to a newer
one. The relevance of such upgrades is that they usually affect almost
all of the packages present in user installation. Such kind of
upgrades are usually already performed wisely by system administrators
devoting large enough time slots.

During release upgrades system administrators are often faced with the
choice of whether to switch to a new major version of some available
software or to stay with an older, legacy one. For instance, one can
have the choice to switch to the Apache Web server 2.x series, or to
stay with Apache 1.x. The upgrade is not forced by strict package
versioning by either offering packages with different names
(e.g. \texttt{apache1} vs \texttt{apache2} in Debian and its
derivatives) or by avoiding explicit conflicts among the two set of
versions (as it happens in RPM-based distributions). The choice is
currently not technically well assisted: if \texttt{apache2} is
tentatively chosen, the package manager will propose to upgrade all
involved packages to the most recent version without highlighting
which upgrades are \emph{mandatory} to fulfil dependencies and which
are not.

While this is a deficiency of state of the art solving
algorithms~\cite{mancoosi-debconf8}, strong dependencies offer a cheap
technical device to work around the problem with current solvers. It
is enough to compute the strong dependency graph of both distributions
and, in particular, the strong dependencies of the two (or more)
involved packages. Then, by taking the difference of the strong
dependencies in the new and in the old graph, the list of package
which must be forcibly upgraded to do the switch is obtained. All such
\emph{forced upgrades} can then be presented to the administrator to 
better guide her or his choice.

\paragraph{Architecture visualization and monitoring}
As the KDE cluster in Figure \ref{fig:sarge-kde} shows, there are
cases where the distributed development of a universe of components
can lead to partially overlapping, duplicated components, which mangle
the global architecture of the distribution; strong dominator graphs
extract significant information from the huge dependency graph, and
help identify in a very visual way the areas that need to be analysed
in detail; it is natural to try and improve this line of research by
coupling the information found via strong dominators to other information
available on the software components, like the developer team members, or the
upstream source code.

\section{Related works}
\label{sec:related}

We have identified several interesting articles dealing with issues
related to the topics we address. In the area of complex networks,
\cite{labelle_os_networks,maillart_empirical_2008} used \FOSS{}
distributions as case studies. The former is the closest to our
focus, as it studies the network structure obtained from Debian
inter-package relationships, showing that it is small-world, as the
node connectivity follows a near power-law distribution. However,
the analysis is performed on the direct dependency graph, which we
have shown misses the real nature of dependencies.

We could not get more information on how the data
of~\cite{labelle_os_networks} has been computed, as the snapshot of
Debian used there comes from late 2004, and is no longer available
in the Debian archives; based on the figures presented in the paper,
and our analysis of the closest Debian stable distribution, we
conclude that their analysis dropped all information about
\texttt{Conflicts} and \texttt{Pre-Depends}. As a consequence, the
figures produced for what is called in the paper ``the 20 most
highly depended upon packages'' falls extremely short of reality:
\texttt{libc6} is crucial for 3 times more packages than what is
reported, and other critical packages such as \texttt{gcc-4.3-base}
are entirely missed.

In the area of quality assurance for large software projects, many
authors correlate component dependencies and past failure rates in
order to predict future
failures~\cite{zimmermann-predicting-defects,nagappan_churn,zeller-predicting}.
The underlying hypothesis is that software ``fault-proneness'' of a
component is correlated to changes in components that are tightly
related to it.  In particular if a component $A$ has many
dependencies on a component $B$ and the latter changes a lot between
versions, one might expect that errors propagates through the
network reducing the reliability of $A$.  A related interesting
statistical model to predict failures over time is the ``weighted
time damp model'' that correlates most recent changes to software
fault-proneness~\cite{graves-predicting-history}. Social network
methods~\cite{hanneman05-introduction} were also used to validate
and predict the list of \emph{sensitive} components in the Windows
platform~\cite{zimmermann-predicting-defects}.

Our work differs for two main reasons. First, the source of
dependency information is quite different. While dependency
analysing for software components is inferred from the source code,
the dependency information in software distributions are formally
declared and can be assumed to be, on the average, trustworthy as
reviewed by the package maintainer. Second, \FOSS{} distributions
still lack the needed data to correlate upgrade disasters with
dependencies and hence to create statistical models that allow to
predict future upgrade disasters. In more detail, the \FOSS{}
ecosystem is really fond of public bug tracker systems, but
generally lacks explicit logging of upgrade attempts and a way to
associate specific bugs to them. One of the goal of the
\MANCOOSI\footnote{\url{http://www.mancoosi.org}} project---in which
the authors are involved---is to create a corpus of upgrade problems
which will be a first step in this direction.

The key idea behind the notion of sensitivity can be seen as a
direct application of the evaluation of ``disease spreading speed''
in small world networks~\cite{watts_collective_1998}: the higher the
sensitivity, the larger the impact sets, the higher the (potential)
bug spreading speed. The semantic definition of impact sets is
crucial in this analysis: using the direct dependency graph would
give no guarantee about which components will be effectively
installed and therefor help bug spreading.

\section{Conclusion and future work}
\label{sec:outro}

This paper has introduced the novel notions of \emph{strong
  dependencies} between software components, and of \emph{sensitivity}
as a measure of how many other components rely on its availability; and
we have introduced \emph{strong dominance} as a means of ordering and grouping
components with similar sensitivity into meaningful clusters. We
have shown concretely on a large scale real world example 
that such notions are better suited to describe true
inter-component relationships than previous studies, which were solely
based on the analysis of the syntactic (or direct) dependency
graph. The main applications of these new notions are tools for
quality assurance in large component ecosystems and upgrade risk
evaluation.

The new notions have been tested on one of the largest known
component-based system: Debian GNU/Linux, a popular \FOSS{}
distribution. Historical analysis of Debian strong and direct
dependency graphs have been performed. Empirical evidence shows that,
while the two notions are generally correlated, there are several
components on which they give huge differences, with direct
dependencies entirely missing key components that are rightly
identified by strong dependencies.  We believe the case shown in this
paper is strong enough to totally dismiss, in the future, measures
built on direct dependencies as soon as the dependency language is as
expressive as the one used for \FOSS{} distributions.

We hence strongly advocate the evaluating of sensitivity on top of strong
dependencies, and we have shown clearly how clustering components according to
the notion of strong dominance allows to build a meaningful presentation of
data, and uncover deep relationships among components in a repository.

Despite the theoretical complexity of the problem, and the sheer size of modern
component repositories, we have succeeded in designing a simple optimised
algorithm for computing strong dependencies that performs very well on real
world instances, making all the measures proposed in this paper not only
meaningful, but actually feasible.

Previous studies on network properties---such as small world
characteristics---have been redone on the Debian strong dependency graph,
showing that it stays small world.

Future works is planned in various directions. First of all the notion
of installation impact set needs to be refined. While it is clear that
the strong impact set is an under approximation of it, it is not clear
how to further refine it. On one hand we want to get closer to the
actual set of potentially affected packages on a given machine. On the
other it is not clear, for a package $p$ depending on $q~|~r$ to which
extent \emph{both} packages should be considered as potentially
impacted by a bug in $p$. It appears to be a limitation in the
expressiveness of the dependency language which does not state an
order between $q$ and $r$, but needs further
investigation. Interestingly enough, the implicit syntactic order
``$p$ before $q$'' is already taken into account by some distribution
tools such as build daemons and is hence worth modelling.

Distributions like Debian use a staged release strategy, in which two
repositories are maintained: an ``unstable'' and a ``testing''
one. Packages get uploaded to unstable and migrate to testing when
they satisfy some quality assurance criteria, including the goal of
maintaining testing devoid of uninstallable packages. Current modelling
of the problem is scarce and implementations rely on empirical
package-by-package migration attempts. We believe that the notion of strong dependency
and the clusters entailed by strong dominance can help in identifying clusters
of packages which should forcibly migrate together.

\section{Acknowledgements}
\label{sec:ack}

The authors would like to thank Yacine Boufkhad, Ralf Treinen and Jer\^ome Vouillon
for many interesting discussions on these issues.


\bibliography{biblio}

\begin{thebibliography}{10}

\bibitem{albert_www_diameter}
R.~Albert, H.~Jeong, and A.~Barabasi.
\newblock The diameter of the world wide web.
\newblock {\em Nature}, 401:130--131, July 1999.

\bibitem{albert-2000-406}
R.~Albert, H.~Jeong, and A.~Barabasi.
\newblock Error and attack tolerance of complex networks.
\newblock {\em Nature}, 406:378, 2000.

\bibitem{apache-maven}
{Apache Software Foundation}.
\newblock {Maven} project.
\newblock \url{http://maven.apache.org/}, 2009.

\bibitem{clayberg-eclipse-plugin}
E.~Clayberg and D.~Rubel.
\newblock {\em Eclipse {Plug-ins}}.
\newblock Addison-Wesley Professional, 3 edition, Dec. 2008.

\bibitem{EDOS-FRCSS06}
R.~{Di Cosmo}, B.~Durak, X.~Leroy, F.~Mancinelli, and J.~Vouillon.
\newblock Maintaining large software distributions: new challenges from the
  {FOSS} era.
\newblock In {\em FRCSS 2006}, 2006.
\newblock EASST Newsletter.

\bibitem{hotswup:package-upgrades}
R.~{Di Cosmo}, P.~Trezentos, and S.~Zacchiroli.
\newblock Package upgrades in {FOSS} distributions: Details and challenges.
\newblock In {\em HotSWup'08}, 2008.

\bibitem{dick-mining-db}
S.~Dick, A.~Meeks, M.~Last, H.~Bunke, and A.~Kandel.
\newblock Data mining in software metrics databases.
\newblock {\em Fuzzy Sets and Systems}, 145(1):81--110, 2004.

\bibitem{fenton_software-metrics}
N.~E. Fenton and S.~L. Pfleeger.
\newblock {\em Software Metrics: A Rigorous and Practical Approach, Revised}.
\newblock Course Technology, 2 edition, Feb. 1998.

\bibitem{graves-predicting-history}
T.~L. Graves, A.~F. Karr, J.~S. Marron, and H.~Siy.
\newblock Predicting fault incidence using software change history.
\newblock {\em IEEE Trans. Softw. Eng.}, 26(7):653--661, 2000.

\bibitem{hanneman05-introduction}
R.~A. Hanneman and M.~Riddle.
\newblock {\em Introduction to social network methods}.
\newblock University of California, Riverside, 2005.

\bibitem{baraona-spl}
I.~Herraiz, G.~Robles, R.~Capilla, and J.~{Gonzalez-Barahona}.
\newblock Managing libre software distributions under a product line approach.
\newblock In {\em COMPSAC'08}, pages 1221--1225, 2008.

\bibitem{debian-policy}
I.~Jackson and C.~Schwarz.
\newblock Debian policy manual.
\newblock \url{http://www.debian.org/doc/debian-policy/}, 2009.

\bibitem{mining-sw-quality}
T.~M. Khoshgoftaar, E.~B. Allen, W.~D. Jones, and J.~P. Hudepohl.
\newblock Data mining for predictors of software quality.
\newblock {\em International Journal of Software Engineering and Knowledge
  Engineering}, 9(5):547--564, 1999.

\bibitem{labelle_os_networks}
N.~LaBelle and E.~Wallingford.
\newblock Inter-package dependency networks in open-source software.
\newblock {\em CoRR}, cs.SE/0411096, 2004.

\bibitem{dynamine}
B.~Livshits.
\newblock Dynamine: Finding common error patterns by mining software revision
  histories.
\newblock In {\em In ESEC/FSE}, pages 296--305. ACM Press, 2005.

\bibitem{maillart_empirical_2008}
T.~Maillart, D.~Sornette, S.~Spaeth, and G.~V. Krogh.
\newblock Empirical tests of zipf's law mechanism in open source linux
  distribution.
\newblock {\em 0807.0014}, June 2008.

\bibitem{EdosAse06}
F.~Mancinelli, J.~Boender, R.~Di~Cosmo, J.~Vouillon, B.~Durak, X.~Leroy, and
  R.~Treinen.
\newblock Managing the complexity of large free and open source package-based
  software distributions.
\newblock In {\em ASE}, pages 199--208, 2006.

\bibitem{nagappan_churn}
N.~Nagappan and T.~Ball.
\newblock Using software dependencies and churn metrics to predict field
  failures: An empirical case study.
\newblock In {\em ESEM}, pages 364--373, 2007.

\bibitem{zeller-predicting}
S.~Neuhaus, T.~Zimmermann, C.~Holler, and A.~Zeller.
\newblock Predicting vulnerable software components.
\newblock In {\em ACM Conference on Computer and Communications Security},
  pages 529--540, 2007.

\bibitem{robles-mining-compilations}
G.~Robles, J.~M. Gonzalez-Barahona, M.~Michlmayr, and J.~J. Amor.
\newblock Mining large software compilations over time: another perspective of
  software evolution.
\newblock In {\em MSR '06}, pages 3--9. ACM, 2006.

\bibitem{szyperski}
C.~Szyperski.
\newblock {\em Component Software: Beyond Object-Oriented Programming}.
\newblock Addison Wesley Professional, 1997.

\bibitem{mancoosi-debconf8}
R.~Treinen and S.~Zacchiroli.
\newblock Solving package dependencies: from {EDOS} to {Mancoosi}.
\newblock In {\em {DebConf} 8: 9th conference of the {Debian} project}, 2008.

\bibitem{watts_collective_1998}
D.~J. Watts and S.~H. Strogatz.
\newblock Collective dynamics of small-world networks.
\newblock {\em Nature}, 393(6684):440--442, June 1998.

\bibitem{zimmermann-predicting-defects}
T.~Zimmermann and N.~Nagappan.
\newblock Predicting defects using network analysis on dependency graphs.
\newblock In {\em ICSE'08}, pages 531--540. ACM, 2008.

\end{thebibliography}
\bibliographystyle{abbrv}
\newpage

\appendix
\renewcommand{\thesection}{\Alph{section}} 
\section{Case Study: Evaluation of debian structure}
\label{sec:sample-dominators}

The strong dominance graph can be used to extract and visualize
cluster of components with similar sensitivity that are hidden when
looking directly at the much bigger direct (or strong) dependency
graph.  To showcase this application of strong dominance, we present
the corresponding graphs for various Debian releases (see
Figures~\ref{fig:debian-strong-dom-series1},
\ref{fig:debian-strong-dom-series2},
\ref{fig:debian-strong-dom-series3},
\ref{fig:debian-strong-dom-series4}, and
\ref{fig:debian-strong-dom-series5}). In Figure \ref{fig:sarge-kde} we
highlight the structure of the KDE cluster in Debian 3.1.

Early releases (Figure~\ref{fig:debian-strong-dom-series1}) have small
dominance graphs hinting that the Debian distribution was composed,
back then, by a small number of loosely coupled package clusters. From
Debian 2.2 a growth of the strong dominance graph can be observed.

The following graphs are generate removing all non-trivial (i.e.,
strictly larger than 2 components) strong dominance clusters and with
a fuzzy index of $5\%$. All images are embedded in the electronic
format of this document in vectorial format allowing to ``zoom in''
the otherwise huge graph. The source files (in sgv and dot format) can
be retrieved from the mancoosi website at
\url{http://www.mancoosi.org}.

\begin{figure}[b!]
  \centering
  \subfloat[][]{\includegraphics[width=0.40\textwidth]{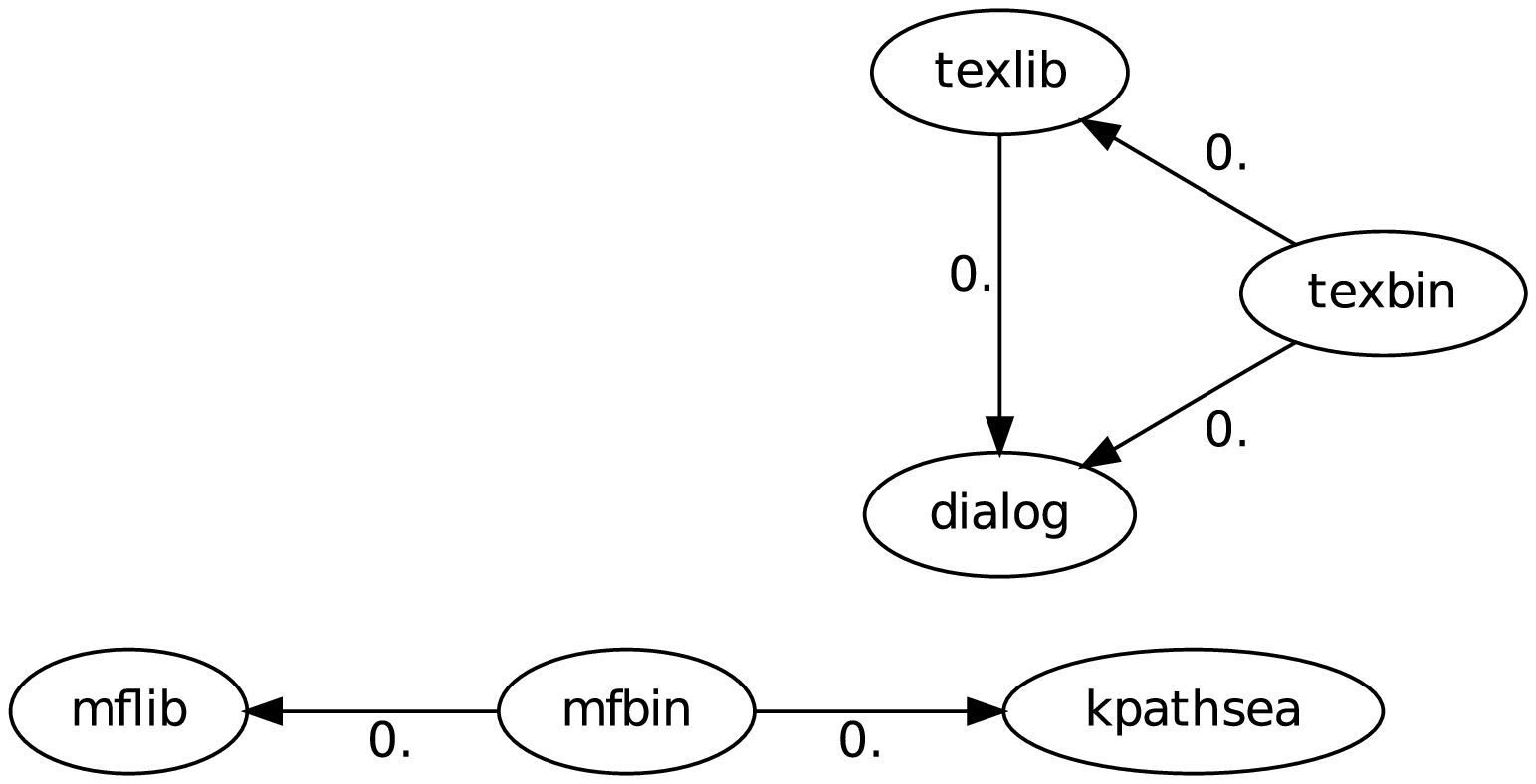}}%
  \qquad
  \subfloat[][]{\includegraphics[width=0.40\textwidth]{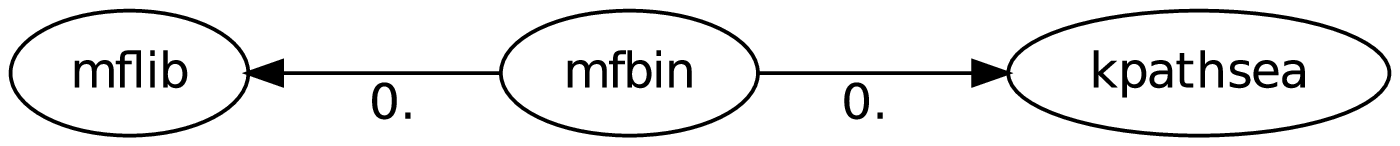}}%
  
  \subfloat[][]{\includegraphics[width=0.40\textwidth]{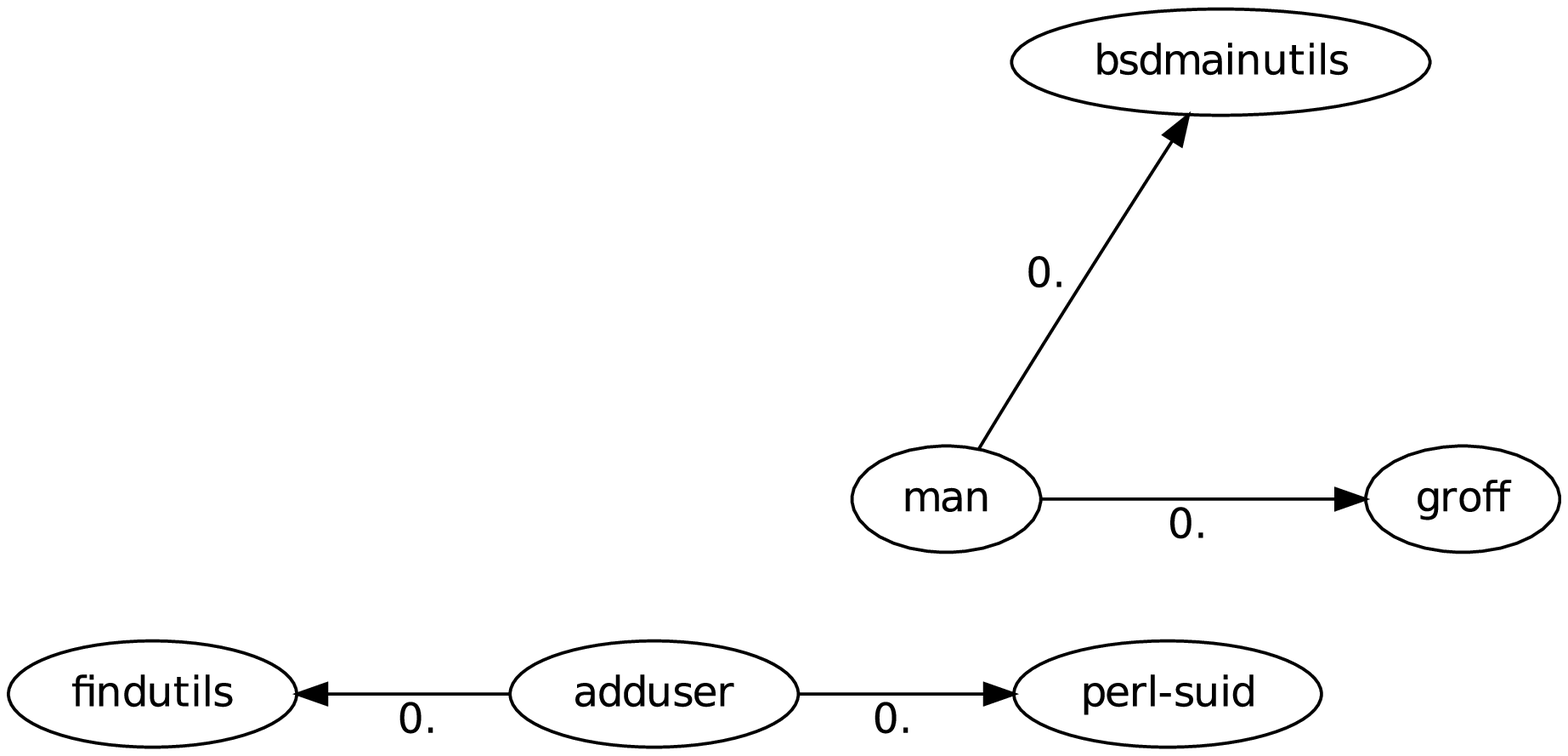}}%
  \qquad
  \subfloat[][]{\includegraphics[width=0.40\textwidth]{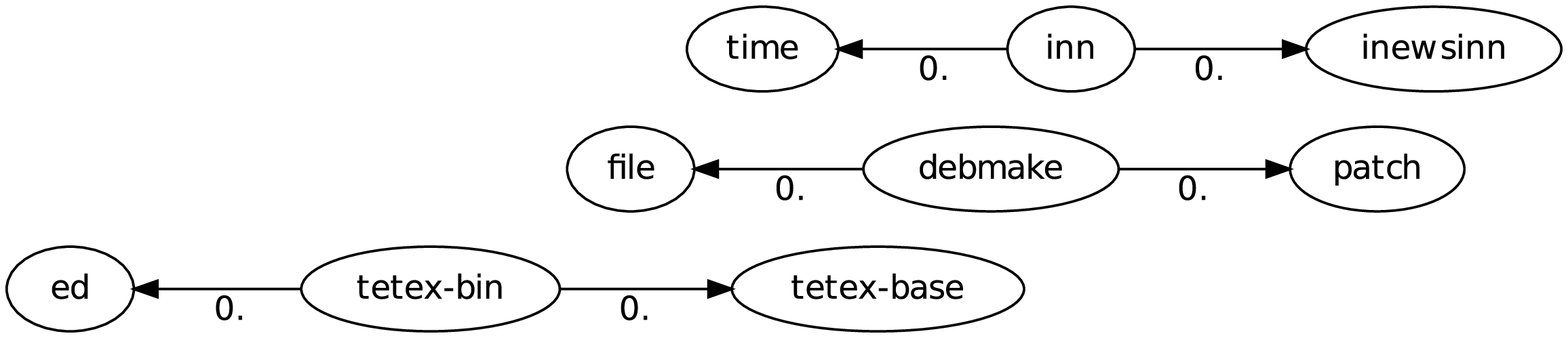}}%
  
  \subfloat[][]{\includegraphics[width=0.40\textwidth]{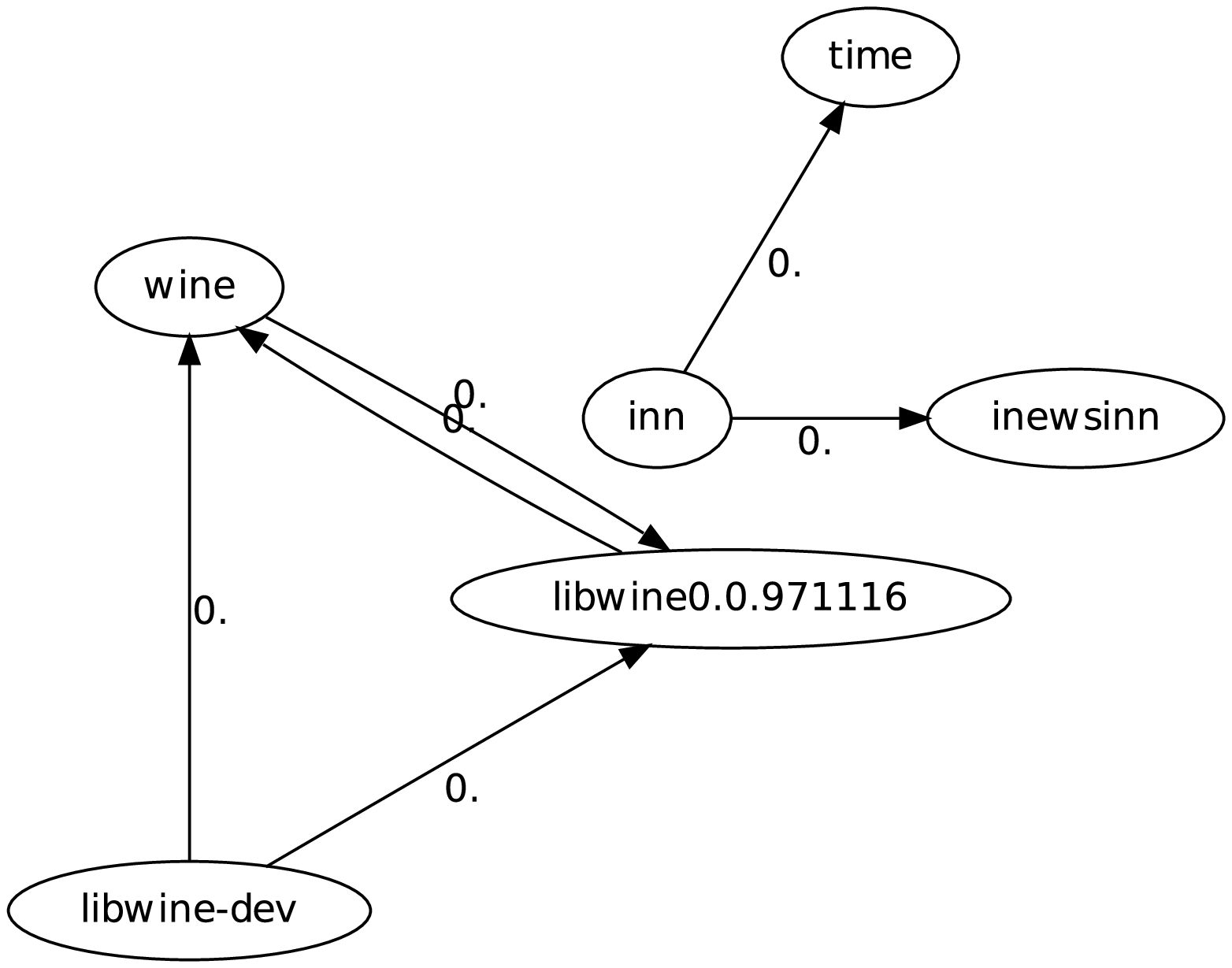}}%
  \qquad
  \subfloat[][]{\includegraphics[width=0.40\textwidth]{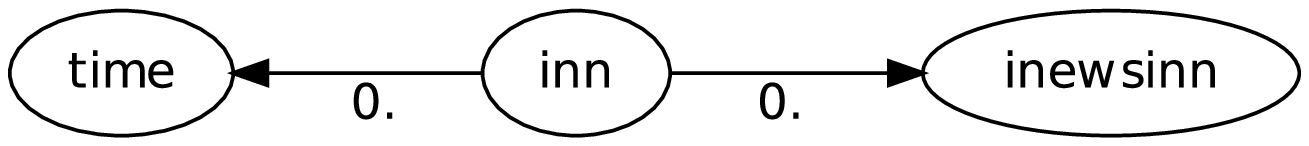}}%
  
  \caption{Strong dominance graphs for Debian 0.93R6, 1.1, 1.2, 1.3.1, 2.0, 2.1}%
  \label{fig:debian-strong-dom-series1}%
\end{figure}

\begin{figure}%
  \centering
  \subfloat[][]{\includegraphics[width=0.70\textwidth]{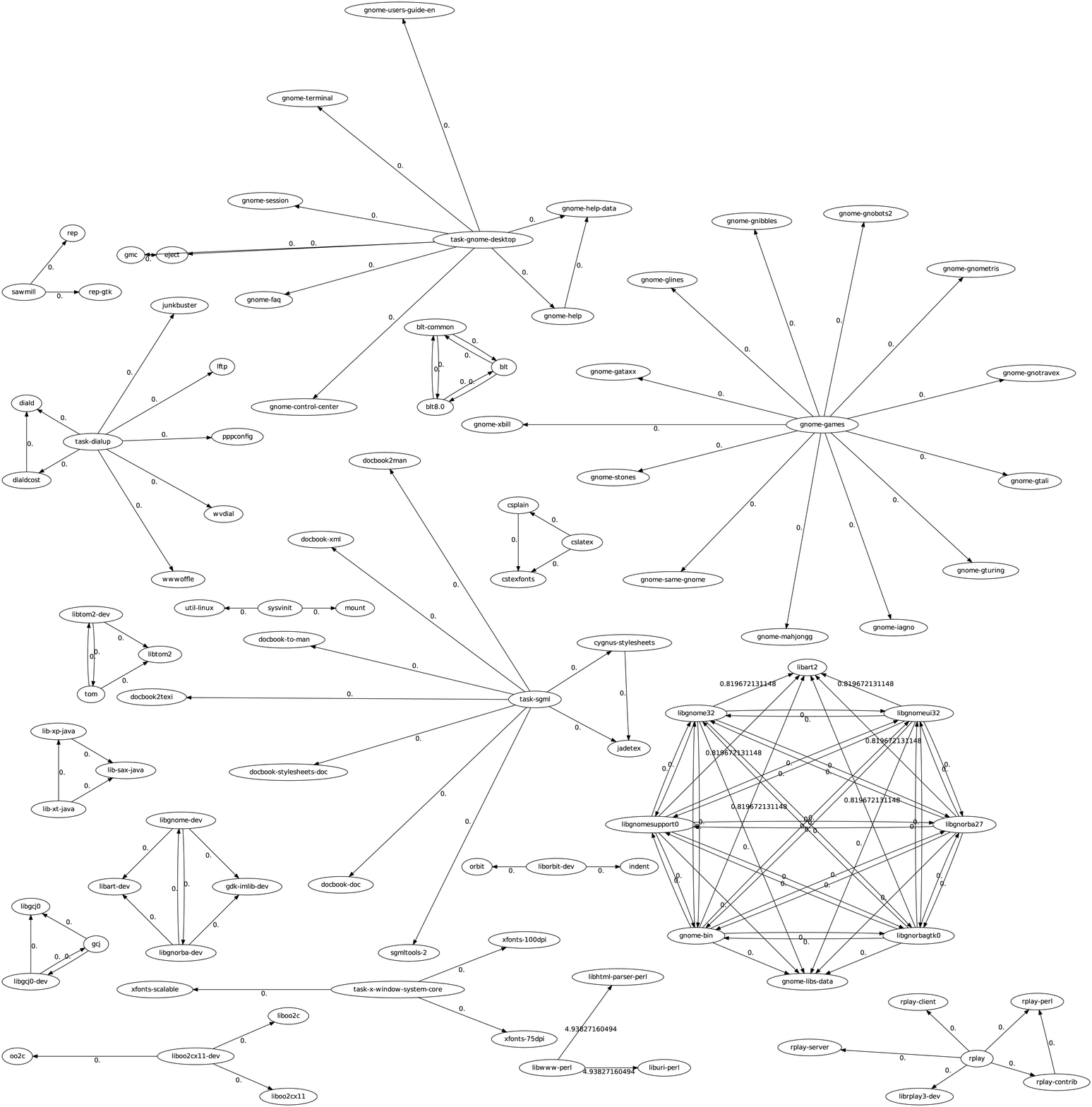}}%
  \qquad
  \subfloat[][]{\includegraphics[width=0.70\textwidth]{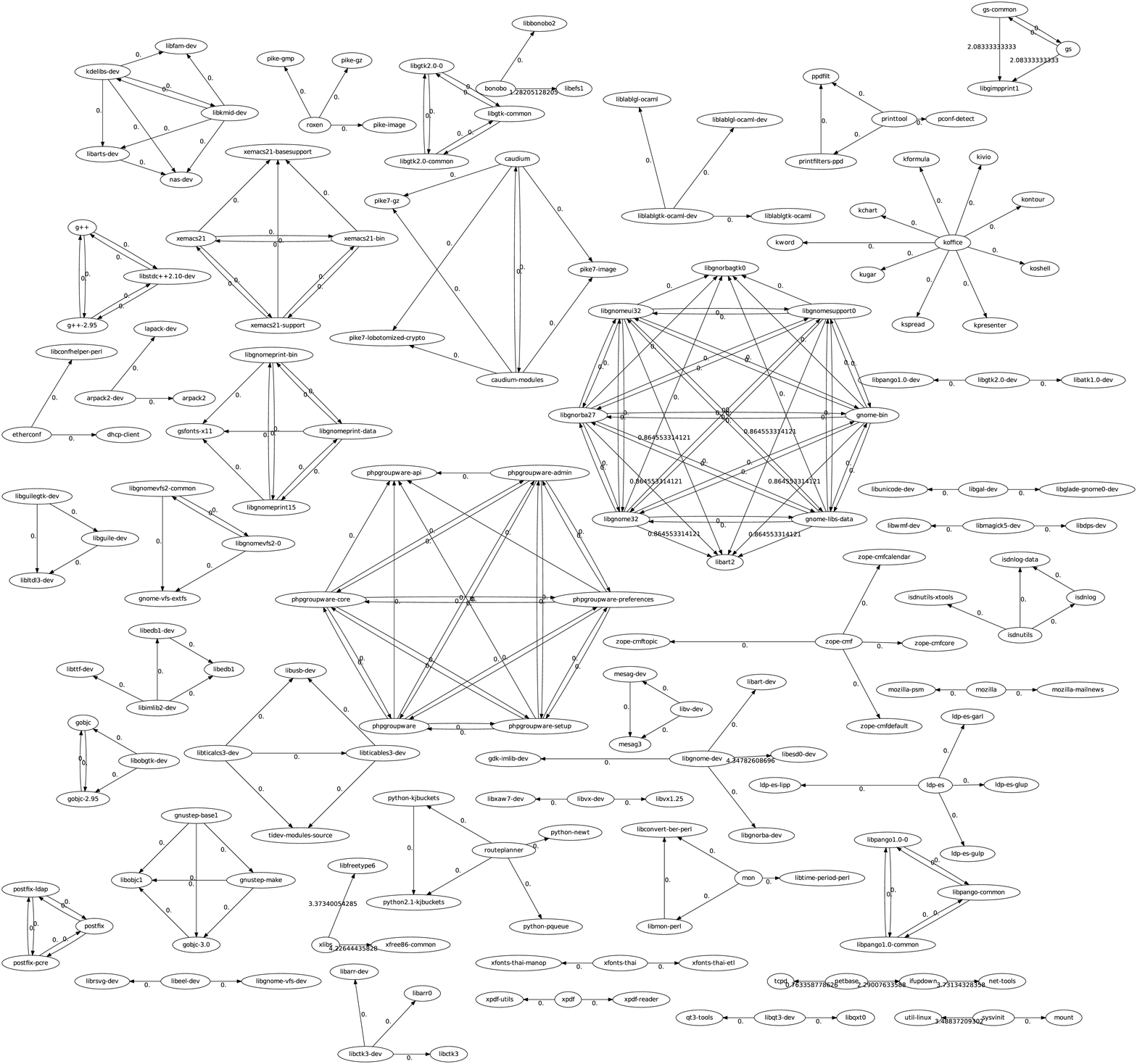}}%

  \caption{Strong dominance graphs for Debian 2.2, 3.0}%
  \label{fig:debian-strong-dom-series2}%
\end{figure}

\begin{figure}
  \center
  \includegraphics[angle=90,width=0.98\textwidth,height=0.98\textheight]{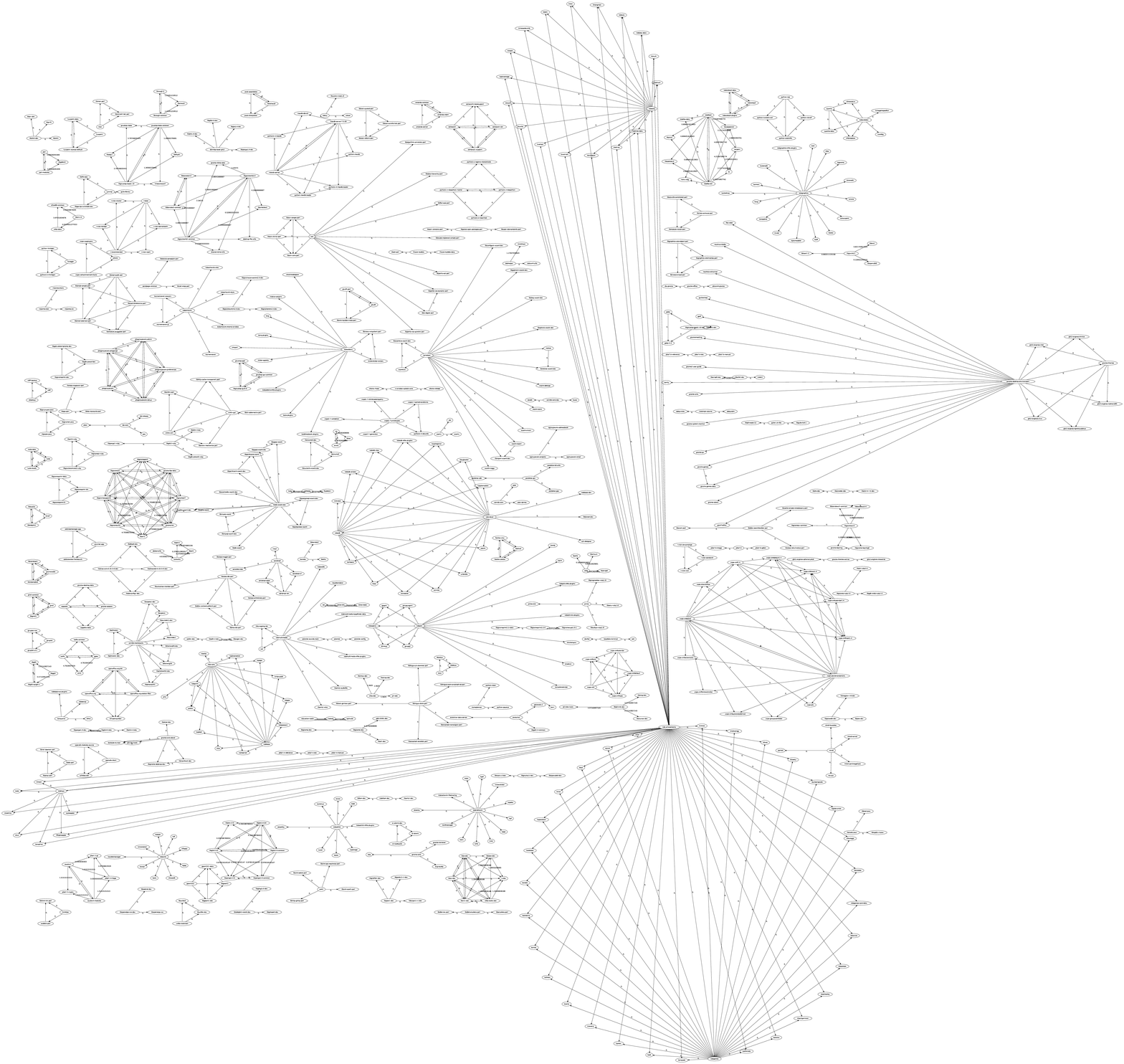}
  \caption{Strong dominance graph for Debian 3.1.\strut}
  \label{fig:debian-strong-dom-series3}
\end{figure}

\begin{figure}
  \center
  \includegraphics[angle=90,width=0.98\textwidth,height=0.98\textheight]{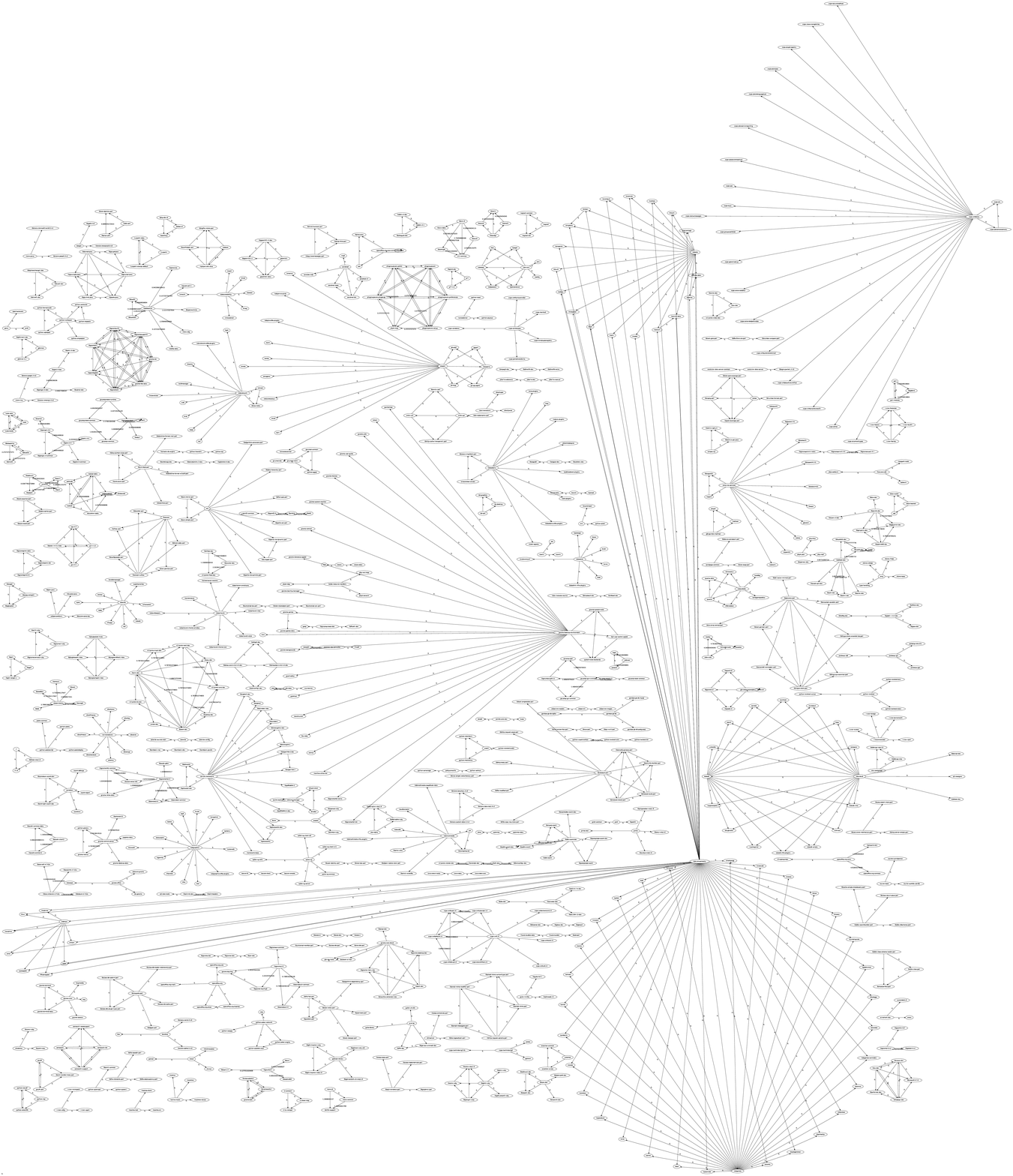}
  \caption{Strong dominance graph for Debian 4.0.\strut}
  \label{fig:debian-strong-dom-series4}
\end{figure}

\begin{figure}[f]
  \center
  \includegraphics[angle=90,width=0.90\textwidth,height=0.90\textheight]{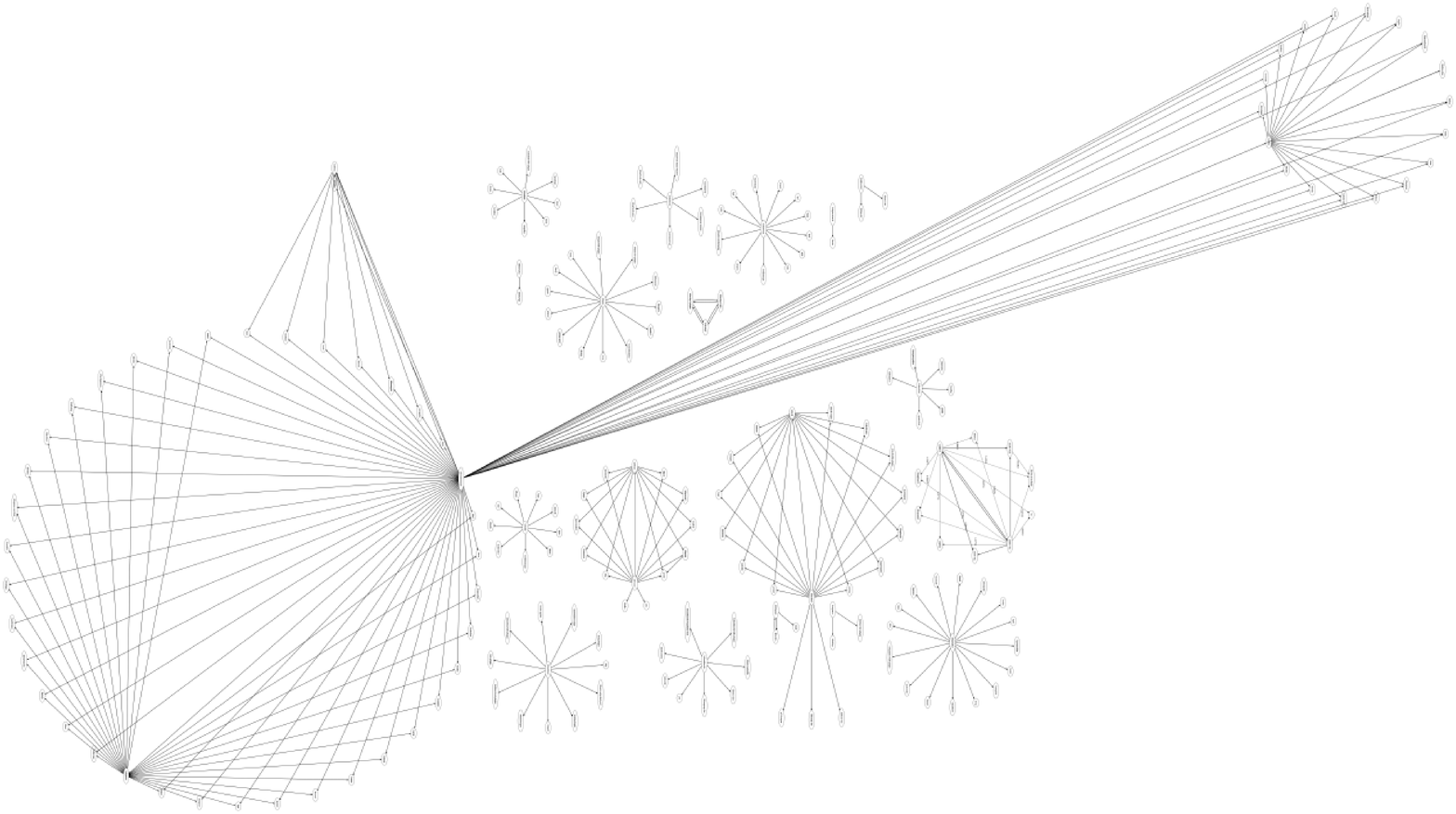}
  \caption{The Figure shows the cluster of packages associated to the
  KDE suite in the strong dominator graph for Debian Sarge.  We can
  immediately spot the package \texttt{kde-amusements}, that is
  partially overlapping \texttt{kde-games} and \texttt{kdeedu}: this
  is an architectural issue, that has been fixed in later Debian
  versions.\strut}
  \label{fig:sarge-kde}
\end{figure}

\begin{figure}
  \center
  \includegraphics[angle=90,width=0.95\textwidth,height=0.95\textheight]{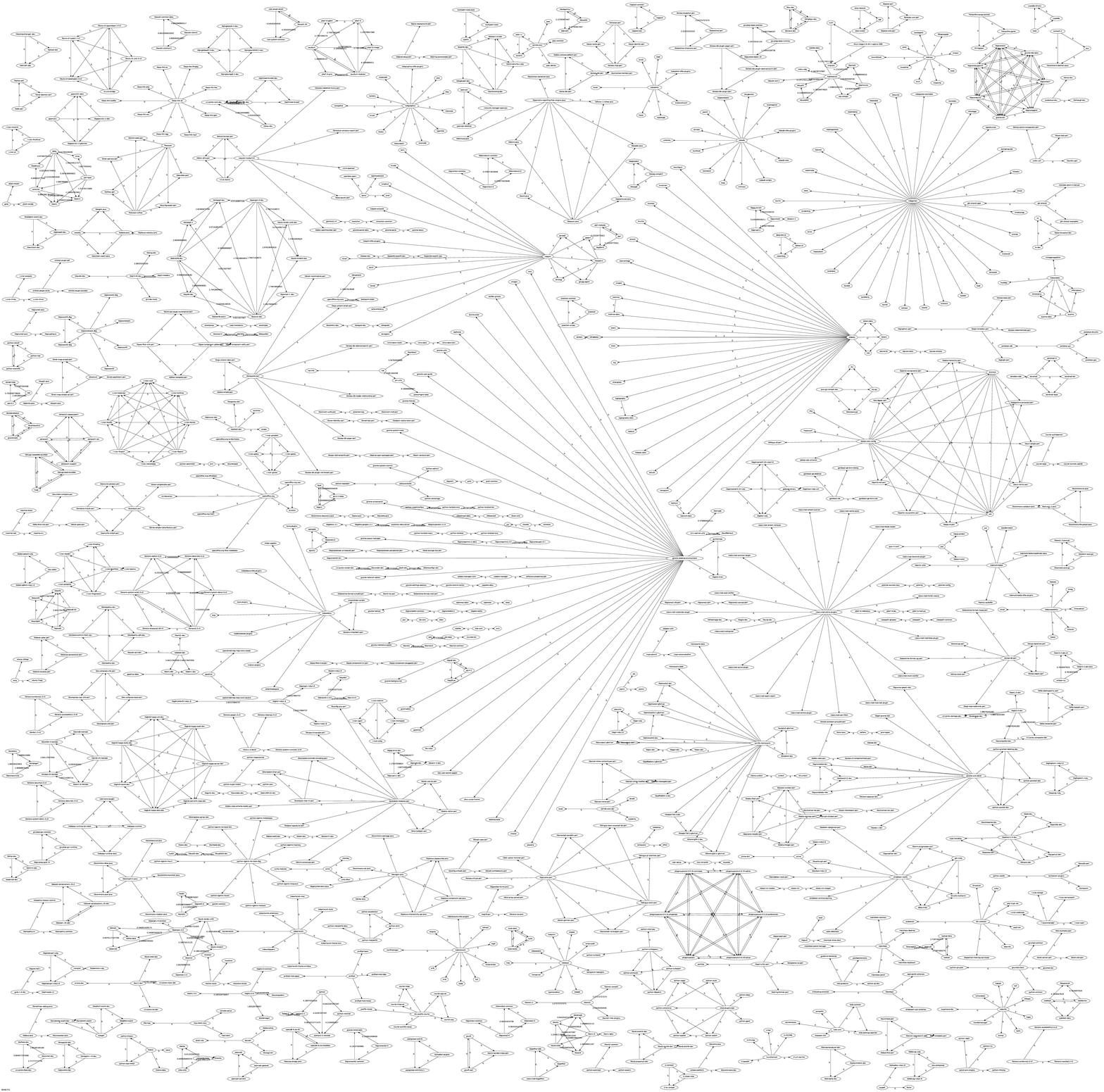}
  \caption{Strong dominance graph for Debian 5.0. The two bigger
  clusters are related to packages coming from the GNOME and KDE
  desktop environments.\strut}
  \label{fig:debian-strong-dom-series5}
\end{figure}

\end{document}